\documentclass{article}
\PassOptionsToPackage{numbers,compress}{natbib}
\usepackage[main, preprint]{neurips_2026}
\usepackage[utf8]{inputenc}
\usepackage[T1]{fontenc}
\usepackage{hyperref}
\usepackage{url}
\usepackage{booktabs}
\usepackage{amsfonts}
\usepackage{amsmath}
\usepackage{amssymb}
\usepackage{nicefrac}
\usepackage{amsthm}
\newtheorem{proposition}{Proposition}
\usepackage{microtype}
\usepackage{xcolor}
\usepackage{enumitem}
\usepackage{multirow}
\usepackage{array}
\usepackage{graphicx}
\hypersetup{hidelinks}
\usepackage{cleveref}
\usepackage{mathtools}
 
\title{Neuronal Identity as an Organizational Basis for Analyzing Neural Population Dynamics}

\author{%
  Seungjae Han$^{1*}$
  \quad  Joshua Yedam You$^{1*}$
  \quad  Soi Kim$^{1*}$
  \quad  Minho Eom$^{1}$
  \quad  Jinho Park$^{1}$ \\
  \bfseries \quad  Seunghee Park$^{1}$
  \quad  Byeongwook Lee$^{2,3,4\dagger}$
  \quad  Young-Gyu Yoon$^{1,5,6\dagger}$ \\
  \\
  \mdseries $^{1}$School of Electrical Engineering, KAIST, Daejeon, Republic of Korea \\
  \mdseries $^{2}$Department of Brain and Cognitive Sciences, KAIST, Daejeon, Republic of Korea \\
  \mdseries $^{3}$Graduate School of Data Science, KAIST, Daejeon, Republic of Korea \\
  \mdseries $^{4}$Department of AX, KAIST, Daejeon, Republic of Korea \\
  \mdseries $^{5}$Department of Semiconductor System Engineering, KAIST, Daejeon, Republic of Korea \\
  \mdseries $^{6}$KAIST Institute for Human Augmentation Convergence, KAIST, Daejeon, Republic of Korea \\
  \mdseries $^{*}$Co-first authors \quad $^{\dagger}$Co-corresponding authors: \texttt{\{bwlee, ygyoon\}@kaist.ac.kr}\\
}

\begin{document}

\renewcommand{\thefootnote}{}
\footnotetext{Project page and code: \url{https://nicalab.github.io/identity-group-routing}}

\maketitle

\begin{abstract}
Individual neurons exhibit diverse molecular, morphological, and electrophysiological properties that shape their distinct roles in neural circuits. While these multimodal features are largely shared across individuals, they remain an underutilized source of information in analyzing neural population dynamics. We posit that these shared biological profiles provide a common reference for relating neural dynamics across subjects and recording sessions. Our framework leverages this structure by routing neurons into latent groups based on their biological profiles and analyzing population dynamics through group-level representations. Because these assignments are anchored in biological properties that are consistent across subjects, the learned organization supports transfer to unseen animals without retraining or neuron-wise correspondence. We evaluate our framework on whole-brain calcium imaging in larval zebrafish and large-scale electrophysiology in the mouse visual cortex. Across both species and recording modalities, the framework delivers state-of-the-art performance in both neural decoding and manifold analysis: it achieves superior decoding of behavior and stimulus labels while better preserving the temporal consistency and local geometric structure of latent trajectories on unseen data. Our experiments demonstrate that not only does providing biological features improve baseline performance, but our proposed framework achieves even stronger results by using these properties to explicitly structure the population rather than treating them as auxiliary inputs. Together, these results suggest that organizing neural populations by their multimodal biological features offers a promising framework for generalizing population dynamics across subjects and recording sessions.
\end{abstract}

\section{Introduction}
\label{sec:intro}

Recent advances in large-scale neural recording and analysis have opened new opportunities for studying how neural populations give rise to perception, action, and cognition, and have spurred a growing body of deep learning approaches for analyzing population dynamics~\cite{urai2022large,stringer2024analysis,schneider2023cebra,gosztolai2025marble,han2024pixelstoinfo,mathis2024decoding,you2025preserving,yoon2026event}. Yet existing approaches still struggle with generalization across individuals. Every recording captures a different set of neurons, and activity patterns shift with recording conditions, internal states, and individual variability~\cite{azabou2023poyo,arora2025know}. Because the recorded populations themselves differ across animals, there is no guaranteed correspondence between the neural populations, presenting a significant challenge for the direct transfer of learned representations to unseen individuals.

The multimodal biological features of individual neurons, or neuronal identity, represent a complementary source of information that has remained largely underutilized in population dynamics analysis. Neurons can be characterized along multiple biological axes, including transcriptomic, regional, and electrophysiological features, which techniques like Patch-seq integrate at single-cell resolution to link molecular identity with distinct functional roles in neural circuits~\cite{gouwens2020integrated}. For example, in the mouse visual cortex, Pvalb, Sst, and Vip interneurons play specialized, non-redundant roles in gain control, novelty detection, and state-dependent modulation~\cite{pfeffer2013inhibition}, illustrating how neuronal identity shapes individual contributions to circuit computation~\cite{arkhipov2025multimodal_circuits}. Large-scale transcriptomic studies further demonstrate that neurons close in molecular profile participate in related circuit computations~\cite{zeisel2018molecular, tasic2018shared, yao2021taxonomy, biccn2021motorcortex}. Importantly, these features are largely shared across individuals of the same species~\cite{tasic2018shared}, suggesting they may offer a more stable basis for analyzing population dynamics across individuals.

We propose to exploit this structure through an identity-guided, permutation-invariant framework for analyzing neural population dynamics. Neurons are routed into latent groups based on their biological profiles, including transcriptomic, regional, and electrophysiological features, and population dynamics are analyzed at the group level through shared parameters conditioned on these features. Because neuron ordering in a neural recording dataset is arbitrary and varies across recordings, the framework should also be permutation-invariant. Standard permutation-invariant architectures such as DeepSets~\cite{zaheer2017deep} achieve this by treating all neurons as interchangeable units through shared element-wise transformations followed by symmetric global aggregation.
Our framework instead maintains permutation invariance by encoding each neuron's identity and activity through shared neuron-wise networks, then routing these representations into latent groups via assignment-weighted aggregation. Because these group assignments are determined by biological properties that are stable across individuals, the learned organization can be applied to unseen animals without retraining or neuron-wise correspondence.

This design contrasts with existing approaches for analyzing neural population dynamics. Recent large-scale models, such as POYO~\cite{azabou2023poyo} and NEDS~\cite{zhang2025neds}, achieve cross-session transfer by learning session-specific embeddings and relying on statistical regularities in observed dynamical patterns to bridge recordings. Neural manifold analysis methods such as CEBRA~\cite{schneider2023cebra} and MARBLE~\cite{gosztolai2025marble} learn low-dimensional embeddings of population dynamics that reveal how neural circuits represent and transform information, providing an approach for relating population activity to behavior and perception. However, because these approaches rely primarily on activity patterns, they often struggle to generalize to new animals or sessions. Our framework instead grounds each neuron in a common biological coordinate, thereby inducing a shared group structure that supports transfer to held-out animals and sessions without any neuron-wise correspondence.

We evaluate our framework on two large-scale datasets spanning distinct species, recording modalities, and available biological features: a whole-brain larval zebrafish calcium imaging dataset that jointly acquires neural activity and single-cell transcriptomics~\cite{marquezlegorreta2026warp}, and a large-scale mouse visual cortex Neuropixels electrophysiology dataset~\cite{siegle2021survey}. Across both settings, our framework achieves state-of-the-art cross-session behavior and stimulus decoding when applied directly to held-out recordings without retraining or neuron-wise correspondence. While supplying baselines with these same neuronal identity features improves their performance, our identity-guided routing framework consistently outperforms all baselines. These results demonstrate both the inherent value of multimodal biological features of individual neurons and the critical importance of explicitly using them to organize population dynamics rather than just treating them as auxiliary inputs. Our contributions are as follows:
\begin{itemize}
    \item We introduce, to our knowledge, the first neural population analysis framework that leverages multimodal biological features of individual neurons to support direct transfer across animals and recording sessions without retraining or neuron-wise correspondence.
    
    \item We propose an identity-guided routing method that uses multimodal biological features as the organizing basis to assign neurons to latent groups, thereby analyzing population dynamics at the group level through shared parameters.
    
    \item Across whole-brain zebrafish calcium imaging recordings and mouse visual cortex electrophysiology, our framework achieves state-of-the-art neural decoding of behavior and stimulus labels and enables robust neural manifold analysis by preserving the temporal and local geometric structure of latent trajectories in held-out animals.
\end{itemize}

\section{Method}

We seek a framework for analyzing neural population dynamics that transfers across sessions without neuron-wise correspondence. The core challenge is that neurons cannot be matched across animals by recording index because neuron order in a neural recording dataset is arbitrary and changes across animals and sessions. 
Neuronal identity provides a shared description of each neuron across individuals. Neurons with high overlap in neuronal identity tend to support similar circuit roles even across individuals, and can serve as a stable basis for organizing the population. This leads to the following two properties: 
\begin{itemize}
    \item \textbf{Permutation invariance.} The model output should be independent of how recorded neurons are ordered.
    \item \textbf{Identity-based organization.} Neurons with shared properties should be considered together, reflecting the group structure observed in biological neural circuits.
\end{itemize}
Permutation-invariant models treat the neural population as an unordered set, requiring aggregation across neurons in a way that does not depend on their ordering~\cite{lee2019set,zaheer2017deep}. A global sum or mean over all neuron representations achieves this, but treats all neurons identically regardless of their neuronal identity, discarding the organizational structure that identity encodes. This motivates an aggregation organized by neuronal identity, where neurons with overlapping identity profiles are pooled into a shared group representation, contributing to population dynamics through a common functional organization. We therefore analyze population dynamics through two components:
\begin{enumerate}
    \item \textbf{Identity-guided routing.} Each neuron is assigned to a latent group on the basis of its neuronal identity features.
    \item \textbf{Group-level dynamics analysis.} 
    Neural activity is aggregated into latent group representations, which are then processed to obtain the final population state.

\end{enumerate}
Because group assignments are determined by neuronal identity, which is shared across individuals, the learned organizational structure can be applied across subjects and sessions without retraining or neuron-wise correspondence. Section~\ref{sec:framework} specifies each stage in detail.
 
\subsection{Framework}
\label{sec:framework}

We represent the neural population activity at a given time point $t$ as a collection of identity--activity pairs
\[
    \mathcal{S}
    =
    \{(\mathbf{f}_i, \mathbf{x}_i)\}_{i=1}^{N},
\]
where \(N\) denotes the number of recorded neurons in the corresponding session. For the $i$-th neuron, 
\(\mathbf{f}_i \in \mathbb{R}^{F}\) is a neuronal identity vector encoding neuron-specific information, and 
\(\mathbf{x}_i \in \mathbb{R}^{W}\) is a segment of its neural activity centered at time $t$. The goal is to learn a mapping 
\begin{equation}
    \mathbf{z}
    =
    \mathcal{F}(\mathcal{S})
    \in
    \mathbb{R}^{D},
    \label{eq:mapping}
\end{equation}
that produces a population state representation $\mathbf{z}$ transferable across sessions and subjects without neuron-wise correspondence. Since neuron ordering is arbitrary and differs across recordings, $\mathcal{F}$ must satisfy permutation invariance: for every permutation $\pi$ of $\{1,\dots,N\}$,

\begin{equation}
    \mathcal{F}\!\bigl(\{(\mathbf{f}_{\pi(i)},\mathbf{x}_{\pi(i)})\}_{i=1}^N\bigr)
    \;=\;
    \mathcal{F}\!\bigl(\{(\mathbf{f}_i,\mathbf{x}_i)\}_{i=1}^N\bigr).
    \label{eq:pi-inv}
\end{equation}
We now formalize each stage of $\mathcal{F}$ in turn and establish that it satisfies permutation invariance.

\paragraph{Identity-Guided Routing.}
 
The first stage maps each neuron's identity features into a soft group membership vector. Neuronal identity comprises $M$ modalities $\mathbf{f}_i = (\mathbf{f}_i^{(1)},\dots,\mathbf{f}_i^{(M)})$ with $\mathbf{f}_i^{(m)} \in \mathbb{R}^{F_m}$, which are heterogeneous in type and measurement scale. Direct concatenation of raw modality features would allow modalities with larger magnitudes to dominate downstream computation irrespective of biological relevance. Each modality is encoded by a dedicated MLP $\phi_m \colon \mathbb{R}^{F_m} \to \mathbb{R}^{d_m}$, and the resulting embeddings are layer-normalized and concatenated before being projected by a shared MLP $\psi \colon \mathbb{R}^{\sum_m d_m} \to \mathbb{R}^{d_s}$:
\begin{equation}
    \mathbf{s}_i
    \;=\;
    \psi\!\Bigl(
        \bigl[
            \mathrm{LN}\!\bigl(\phi_1(\mathbf{f}_i^{(1)})\bigr);\;
            \cdots;\;
            \mathrm{LN}\!\bigl(\phi_M(\mathbf{f}_i^{(M)})\bigr)
        \bigr]
    \Bigr)
    \;\in\; \mathbb{R}^{d_s}.
    \label{eq:identity-enc}
\end{equation}

Given the identity embedding \(\mathbf{s}_i\), neuron \(i\) is assigned a soft membership vector 
\(\mathbf{a}_i \in \Delta^{K-1}\) over \(K\) latent groups, allowing each neuron to partially contribute to multiple groups in proportion to its similarity in the learned identity embedding space:
\begin{equation}
    \mathbf{a}_i
    =
    \sigma\!\left(
        \frac{\mathbf{W}_{\mathrm{route}}\mathbf{s}_i}{\tau}
    \right),
    \qquad
    \mathbf{W}_{\mathrm{route}} \in \mathbb{R}^{K \times d_s},
    \qquad
    \tau > 0,
    \label{eq:routing}
\end{equation}
where \(\sigma:\mathbb{R}^{K}\to\Delta^{K-1}\) denotes the softmax operator.

\paragraph{Group-level dynamics analysis.}

Using the identity-guided routing weights defined in Eq.~\eqref{eq:routing}, we aggregate neuron-level activity embeddings into latent group-level representations. Each neuron's activity window \(\mathbf{x}_{i} \in \mathbb{R}^{W}\) is first encoded by a shared dynamics encoder $\rho : \mathbb{R}^{W} \to \mathbb{R}^{d_r}$, which is applied independently to every neuron:
\begin{equation}
    \mathbf{v}_{i}
    =
    \rho(\mathbf{x}_{i})
    \in
    \mathbb{R}^{d_r}.
    \label{eq:dynamics-enc}
\end{equation}
Because the same encoder \(\rho\) is used for all neurons, the activity representation does not depend on the arbitrary neuron index. For each latent group \(k \in \{1,\dots,K\}\), we form a group-level activity representation by taking a routing-weighted aggregate of the neuron-level activity embeddings:
\begin{equation}
    \mathbf{c}_{k}
    =
    \frac{1}{\sqrt{\max(m_k,\epsilon)}}
    \sum_{i=1}^{N}
    a_{ik}\mathbf{v}_{i},
    \qquad
    m_k
    =
    \sum_{i=1}^{N} a_{ik}.
    \label{eq:aggregation}
\end{equation}
Here, \(a_{ik}\) denotes the contribution of neuron \(i\) to latent group \(k\), while the normalization by \(\sqrt{\max(m_k,\epsilon)}\) stabilizes the scale of group representations across groups, with \(\epsilon=10^{-6}\) included for numerical stability under sparse routing.
Stacking the $K$ group representations yields the group matrix $\mathbf{C} \in \mathbb{R}^{K \times d_r}$.
To model interactions across latent groups, \(\mathbf{C}\) is processed by a group interaction module \(\Gamma:\mathbb{R}^{K\times d_r}\rightarrow \mathbb{R}^{D}\), operating along the group axis:
\begin{equation}
    \mathbf{z}
    =
    \Gamma(\mathbf{C})
    \in
    \mathbb{R}^{D}.
    \label{eq:group-interaction}
\end{equation}
The resulting population state \(\mathbf{z}\) can be used for behavioral decoding, manifold learning, or other downstream objectives, without modifying the upstream identity-guided routing and group-level aggregation framework.

The construction is permutation-invariant by design. Since the identity encoder, dynamics encoder, and routing map are shared across neurons, any permutation of the input neurons only permutes the corresponding neuron-level embeddings and membership weights. The group representations are then formed by summing over the neuron index, so they are unchanged by such a permutation. Consequently, the group matrix \(\mathbf{C}\) and the final population state \(\mathbf{z}=\Gamma(\mathbf{C})\) satisfy the invariance requirement in Eq.~\eqref{eq:pi-inv}. A formal proof of the permutation invariance is provided in Appendix~\ref{app:perm-inv}.

\begin{figure}[t]
    \centering
    \centerline{\includegraphics[width=\columnwidth]{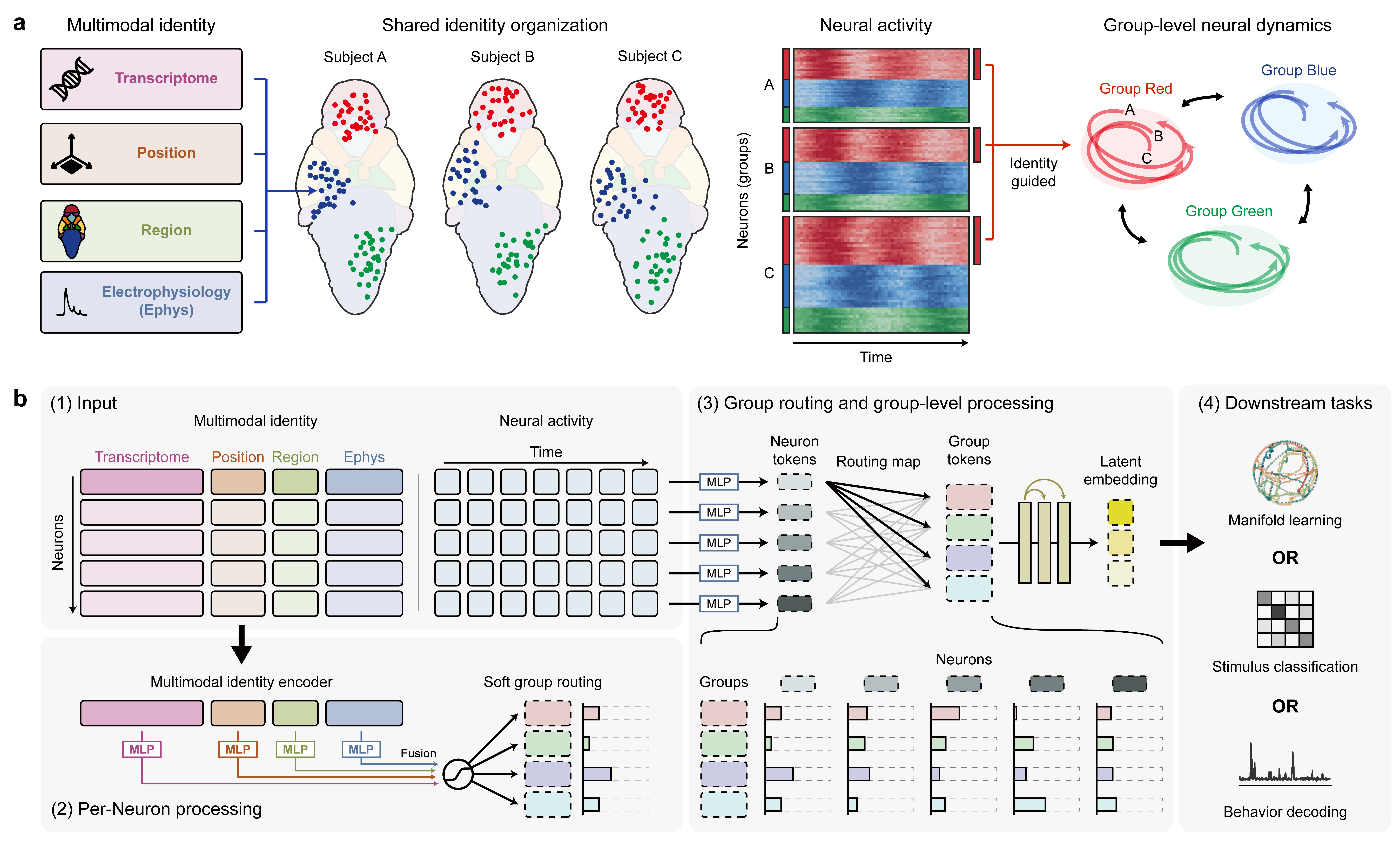}}
    \caption{Overview of identity-guided neural population analysis.
(a) Conceptual overview. Multimodal biological features organize neurons into corresponding groups across sessions, even when exact individual neurons cannot be matched across subjects. Population dynamics are described at the level of these identity-defined groups and their interactions.
(b) Identity-guided routing framework. Identity-related features are encoded to softly route neuron-level activity tokens into latent group tokens, which are processed into a population embedding. Manifold learning, stimulus classification, and behavior decoding are evaluated as separate downstream objectives.
}
    \label{fig:temp}
\end{figure}

\section{Experiments}
\label{sec:experiments}
We evaluate on two datasets spanning different species, recording modalities, and degrees of richness of available neuronal identity information (Section 3.1). We compare against three decoding baselines and four manifold methods (Section 3.2), and assess performance under direct transfer to held-out animals and sessions without retraining using decoding and geometric metrics (Section 3.3).

\subsection{Datasets and Experimental Setup}
\label{sec:datasets}
\paragraph{Whole-brain Neuronal Activity and RNA Profiling 
(WARP).}
The WARP dataset~\cite{marquezlegorreta2026warp} provides whole-brain calcium imaging and single-cell transcriptomics from three larval zebrafish at 6 days post-fertilization, each with approximately 77,000 neurons registered to the mapzebrain atlas~\cite{kunst2019mapzebrain}. Recordings span approximately 70 minutes at 2 Hz and include structured visual stimuli such as grating in four directions, bright and dark flashes, and bilateral looming. Stimulus labels are derived from the visual stimulus conditions and used as targets for stimulus classification, whereas behavioral labels are derived from fictive swimming signals recorded through extracellular motor nerve measurements and used as targets for behavior decoding (Appendix~\ref{app:decoding_targets}, Figure~\ref{fig:decodingtarget}).

To construct multimodal biological features, we use each neuron's transcriptomic expression profile, three-dimensional position, anatomical region, and electrophysiological features motivated by met-type profiling~\cite{gouwens2020integrated}. Because jointly acquired gene expression and neural activity data are difficult to obtain in practice, we additionally evaluate our framework using independently acquired, public gene expression data from the mapzebrain atlas. This atlas provides hybridization chain reaction (HCR) in situ RNA profiles~\cite{kunst2019mapzebrain, shainer2023zebrafish_atlas}, which we assign to each neuron based on its atlas coordinates. Feature construction details are provided in Appendix~\ref{app:identity_construction}.

\paragraph{Allen Visual Coding Neuropixels.} 
The Allen Visual Coding Neuropixels dataset~\cite{siegle2021survey} provides large-scale extracellular electrophysiology recordings from the mouse visual cortex. We use both available datasets: the Brain Observatory dataset with 26 sessions and the Functional Connectivity dataset with 25 sessions, comprising over 80,000 units in total. Each session uses six Neuropixels probes to record spiking activity, with unit locations registered to the Allen Common Coordinate Framework~\cite{wang2020allenccf}. Recordings include high-frequency spiking activity over varied durations and cover structured visual stimuli such as drifting gratings, static images, and natural scenes, grouped into five functional classes per dataset. These stimulus categories are used as targets for stimulus classification, and behavioral data consist of running speed, which serves as the target for behavior decoding   (Appendix~\ref{app:decoding_targets}, Figure~\ref{fig:decodingtarget}). This dataset allows us to test whether our framework improves held-out transfer performance even when molecular profiles are unavailable, utilizing only the remaining neuronal identity features described in Appendix \ref{app:identity_construction}.

\subsection{Baselines}
First, we evaluate neural decoding performance by comparing our method to three decoding baselines: POYO~\cite{azabou2023poyo}, NEDS~\cite{zhang2025neds}, and DeepSets~\cite{zaheer2017deep}. POYO uses cross-attention over tokenized neural events, relying on learned session and unit embeddings to aggregate activity across recordings. NEDS uses a transformer with a multi-task masking strategy over neural activity and behavior. Both methods operate across sessions without requiring neuron-wise correspondence, but rely on fine-tuning for transfer to new sessions. DeepSets serves as the permutation-invariant baseline, aggregating neuron-wise activity representations by summation without identity-guided routing. Full details of the baseline adaptation and identity feature integration are provided in Appendix~\ref{app:decoding_baseline_adaptation}. To isolate the effect of the group-routing architecture from access to multimodal biological features, each baseline is evaluated in two variants. The first is its original form, and the second is an augmented form that receives the same per-neuron identity features as our method, with all other training details unchanged. Concretely, neuronal identity features are incorporated into neuron-level embeddings (Appendix~\ref{app:decoding_identity_effect}).

Second, we evaluate neural manifold analysis by comparing the manifolds learned from our method with those produced by CEBRA~\cite{schneider2023cebra}, MARBLE~\cite{gosztolai2025marble}, LFADS~\cite{pandarinath2018lfads}, and Joint PCA.
CEBRA learns low-dimensional embeddings of neural population activity through contrastive learning. MARBLE models local flow fields over the manifold structure using geometric deep learning. LFADS infers single-trial population dynamics through a sequential variational autoencoder. Joint PCA provides the linear baseline, projecting jointly concatenated population activity from source animals into a shared low-dimensional space. Full details of this adaptation are provided in Appendix \ref{app:manifold_baselines}.

\subsection{Evaluation metrics and protocol}
To evaluate decoding performance, we report the Pearson correlation coefficient (PCC) for behavior regression and accuracy scores for stimulus classification, both evaluated on held-out data. 

To assess the neural manifold analysis, we quantify how the latent representation preserves the temporal geometry of population dynamics on held-out data. We compute three metrics on the latent trajectory $\{z_t\}_{t=1}^{T}$ that evaluate global temporal organization and local structure (Appendix~\ref{app:geometric_metrics}).

\begin{itemize}
    \item \textbf{Temporal Distance Correlation} $(\uparrow)$ measures the Spearman correlation between pairwise latent distances $\|z_t - z_s\|_2$ and temporal distances $|t-s|$. Higher values indicate that states nearby in time remain close in latent space while temporally distant states separate.
    \item \textbf{Normalized Latent-Neighbor Time Gap} $(\downarrow)$ measures the temporal distance to each point's nearest neighbor in latent space, normalized by recording length and averaged over time. Lower values indicate higher temporal consistency in  local latent neighborhoods.
    \item \textbf{Local Trajectory Linearity} $(\uparrow)$ measures the fraction of variance explained by the first principal component within short sliding windows of the latent trajectory~\cite{machens2010eff}. Higher values indicate smoother local dynamics.
\end{itemize}

\paragraph{Held-out transfer protocol.}
\label{par:heldout-protocol}
We evaluate transfer at the level of complete held-out recordings. All methods are trained exclusively on source animals or sessions and then applied directly to the held-out recordings. During inference, learned parameters are strictly frozen. At evaluation time, the entire held-out recording is analyzed, leveraging all available data to construct the per-neuron identity features prior to inference (see Appendix~\ref{app:identity_construction} for construction details). All behavioral and stimulus labels from the held-out recordings are used only to compute the reported evaluation metrics. To ensure fair comparison, this exact transfer protocol is maintained across all baseline evaluations.

For the WARP dataset, we train on two fish and evaluate on the remaining fish. We average these results across all three held-out combinations and five random seeds. For the Allen mouse visual cortex electrophysiology datasets, we train on a fixed set of 20 sessions and evaluate independently on the remaining held-out sessions (six for the Brain Observatory and five for the Functional Connectivity datasets), averaging across three random seeds. All identity feature ablations follow the same held-out protocol, differing only in the specific subsets of identity features provided to the model.

\section{Results}
\label{sec:results}

\subsection{Behavior and Stimulus Decoding Under Direct Held-Out Transfer}
\label{sec:decoding}

Table~\ref{tab:decoding} reports behavior and stimulus decoding performance under held-out transfer to unseen animals or sessions without retraining across all three datasets. Our framework achieves the best performance across all datasets and tasks, outperforming prior methods in all cases. On the WARP dataset, our method achieves a PCC value of 0.793. On the Allen Brain Observatory and Functional Connectivity datasets, our method yields PCC values of 0.734 and 0.694, respectively, marking the highest performance in each setting.

Next, we investigate whether the performance gains stem simply from the inclusion of neuronal features, or from how our framework explicitly utilizes them. Accordingly, we provide POYO, NEDS, and DeepSets with the same multimodal biological features, leaving their original implementation details unchanged (Appendix~\ref{app:decoding_baseline_adaptation}). Across datasets, these identity-augmented baselines show consistent improvements over their original counterparts. Behavior decoding on the WARP dataset increases from 0.304 to 0.489 for POYO, from 0.248 to 0.574 for NEDS, and from 0.160 to 0.466 for DeepSets. Similar trends are observed for stimulus decoding and for the Allen datasets, where identity features are based mainly on position and anatomical region (Appendix~\ref{app:decoding_identity_effect}). However, their performance remains below that of our method across all datasets. This pattern demonstrates that our framework's superior performance cannot be fully explained by just the availability of these biological features. Instead, the key advantage lies in how they are incorporated through identity-guided organization.

We further assess our framework's robustness to the choice and availability of identity modalities by evaluating ablations where we remove one modality at a time, reported in Appendix~\ref{app:ablations}. Across both datasets, all ablated variants remain competitive and continue to outperform baseline methods, showing that the performance gains are not attributable to any single modality.
The model without transcriptomic descriptors achieves a behavior PCC of 0.727 and a stimulus balanced accuracy of 0.711, compared with 0.793 and 0.835 for the full model. This suggests that the framework remains competitive even when costly jointly acquired transcriptomic measurements are unavailable.

\begin{table}[t!]
\centering
\caption{Behavior and stimulus decoding under direct held-out transfer across three datasets
(mean $\pm$ std). Bold indicates the best result
per column within each dataset. * denotes models adapted to calcium signals.}
\label{tab:decoding}
\smallskip
\resizebox{0.95\linewidth}{!}{%
\begin{tabular}{llccc}
\toprule
Dataset & Method
    & Behavior (PCC) $\uparrow$
    & Stimulus (Balanced Accuracy) $\uparrow$
    & Stimulus (Accuracy) $\uparrow$ \\
\midrule
\multirow{4}{*}{WARP}
 & \textbf{Ours}
    & $\mathbf{0.793 \pm 0.015}$
    & $\mathbf{0.835 \pm 0.029}$
    & $\mathbf{0.839 \pm 0.029}$ \\
 & {POYO*~\cite{azabou2023poyo}}
    & {${0.304 \pm 0.120}$} & {${0.153 \pm 0.032}$} & {${0.156 \pm 0.034}$} \\
 & {NEDS*~\cite{zhang2025neds}}
    & {${0.248 \pm 0.207}$} & {${0.195 \pm 0.034}$} & $0.196 \pm 0.033$ \\
 & DeepSets~\cite{zaheer2017deep}
    & $0.160 \pm 0.081$
    & $0.161 \pm 0.011$ & $0.160 \pm 0.011$\\
\midrule
\multirow{4}{*}{\shortstack[l]{Allen Brain\\Observatory}}
 & \textbf{Ours}
    & $\mathbf{0.734 \pm 0.008}$
    & $\mathbf{0.893 \pm 0.012}$ 
    & $\mathbf{0.899 \pm 0.010}$ \\
 & {POYO~\cite{azabou2023poyo}}
    & {${0.526 \pm 0.043}$} & {${0.220 \pm 0.022}$} & $0.258 \pm 0.114$ \\
  & {NEDS~\cite{zhang2025neds}}
    & {${0.500 \pm 0.042}$} & {${0.424 \pm 0.043}$} & $0.432 \pm 0.078$ \\
 & DeepSets~\cite{zaheer2017deep}
    & $0.615 \pm 0.022$
    & $0.684 \pm 0.027$ & $0.612 \pm 0.034$ \\
\midrule
\multirow{4}{*}{\shortstack[l]{Allen Functional\\Connectivity}}
 & \textbf{Ours}
    & $\mathbf{0.694 \pm 0.058}$
    & $\mathbf{0.896 \pm 0.017}$ & $\mathbf{0.932 \pm 0.007}$ \\
 & {POYO~\cite{azabou2023poyo}}
    & {${0.554 \pm 0.080}$} & {${0.211 \pm 0.008}$} & $0.288 \pm 0.117$ \\
 & {NEDS~\cite{zhang2025neds}}
    & {${0.454 \pm 0.036}$} & {${0.398 \pm 0.020}$} & $0.408 \pm 0.091$ \\
 & DeepSets~\cite{zaheer2017deep}
    & $0.588 \pm 0.096$
    & $0.730 \pm 0.044$ & $0.663 \pm 0.095$ \\
\bottomrule
\end{tabular}
}
\end{table}

\subsection{Neural Manifold Geometry Under Direct Held-Out Transfer}
\label{sec:manifold}

Next, we evaluate the performance of our framework in neural manifold analysis, specifically testing its ability to preserve temporal and local geometric structure on unseen recordings without retraining, a capability that has remained largely beyond the reach of existing methods. This capability is crucial for rigorous comparative analysis, allowing researchers to investigate how neural population dynamics vary or remain invariant across different subjects under identical computational constraints.

Table~\ref{tab:manifold} reports temporal and local geometric metrics on held-out animals under the same held-out transfer protocol. On the WARP dataset, our framework achieves a temporal distance correlation of $0.633 \pm 0.078$, indicating that neural states nearby in time remain close in latent space. It also obtains the lowest normalized nearest-neighbor temporal gap, $0.145 \pm 0.009$, and the highest local trajectory linearity, $0.911 \pm 0.006$.
For the manifold baselines, we additionally test \(1,000\) random neuron orderings at inference time and find that our method outperforms the best sampled baseline results in most cases (Appendix~\ref{app:baseline_ordering}).
A consistent pattern is observed on the Allen mouse visual cortex datasets, where our framework again outperforms all baselines across the evaluated metrics. Figure~\ref{figure_manifold_time} shows that held-out trajectories remain temporally coherent and avoid the fragmentation or collapse observed in several baseline embeddings. Trajectories colored by behavior and stimulus labels are shown in Figs.~\ref{figure_manifold_behavior} and~\ref{figure_manifold_stimulus}, respectively.

Together, these results indicate that our framework produces transferred latent trajectories with more consistent temporal neighborhoods and smoother local structure in held-out recordings under the evaluated metrics. This advantage is consistent with a key difference in how the methods handle unseen animals. Baseline methods learn representations tied to the specific set of neurons recorded in the source animals and lack a way to align these representations when the neuron population changes across individuals. Our framework instead grounds each neuron in a common identity coordinate, so the model induces a shared identity-defined group coordinate that supports transfer to held-out animals and sessions.

\begin{table}[t]
\centering
\caption{Geometric structure of latent trajectories evaluated on held-out animals or sessions across three datasets (mean $\pm$ std). Bold indicates the best result per column.}
\label{tab:manifold}
\resizebox{0.92\linewidth}{!}{%
\begin{tabular}{llccc}
\toprule
Dataset & Method
    & \shortstack[c]{Temporal Distance\\Correlation $\uparrow$}
    & \shortstack[c]{Normalized Latent-Neighbor\\Time Gap $\downarrow$}
    & \shortstack[c]{Local Trajectory\\Linearity $\uparrow$} \\
\midrule
\multirow{5}{*}{WARP} & \textbf{Ours} & $\mathbf{0.633 \pm 0.078}$ & $\mathbf{0.145 \pm 0.009}$ & $\mathbf{0.911 \pm 0.006}$ \\
 & CEBRA~\cite{schneider2023cebra} & $0.086 \pm 0.010$ & $0.328 \pm 0.003$ & $0.684 \pm 0.010$ \\
 & MARBLE~\cite{gosztolai2025marble} & $0.045 \pm 0.017$ & $0.332 \pm 0.002$ & $0.512 \pm 0.006$ \\
 & LFADS~\cite{pandarinath2018lfads} & $0.141 \pm 0.020$ & $0.297 \pm 0.012$ & $0.666 \pm 0.015$ \\
 & Joint PCA & $0.151 \pm 0.045$ & $0.312 \pm 0.007$ & $0.887 \pm 0.032$ \\
\midrule
\multirow{5}{*}{\shortstack[l]{Allen Brain\\Observatory}} & \textbf{Ours} & $\mathbf{0.355 \pm 0.033}$ & $\mathbf{0.063 \pm 0.021}$ & $\mathbf{0.896 \pm 0.005}$ \\
 & CEBRA~\cite{schneider2023cebra} & $0.029 \pm 0.004$ & $0.298 \pm 0.005$ & $0.497 \pm 0.006$ \\
 & MARBLE~\cite{gosztolai2025marble} & $0.049 \pm 0.004$ & $0.190 \pm 0.009$ & $0.654 \pm 0.004$ \\
 & LFADS~\cite{pandarinath2018lfads} & $0.179 \pm 0.011$ & $0.134 \pm 0.009$ & $0.545 \pm 0.004$ \\
 & Joint PCA & $0.071 \pm 0.019$ & $0.244 \pm 0.017$ & $0.699 \pm 0.029$ \\
\midrule
\multirow{5}{*}{\shortstack[l]{Allen Functional\\Connectivity}} & \textbf{Ours} & $\mathbf{0.317 \pm 0.036}$ & $\mathbf{0.059 \pm 0.019}$ & $\mathbf{0.893 \pm 0.013}$ \\
 & CEBRA~\cite{schneider2023cebra} & $0.025 \pm 0.006$ & $0.298 \pm 0.015$ & $0.494 \pm 0.007$ \\
 & MARBLE~\cite{gosztolai2025marble} & $0.047 \pm 0.005$ & $0.196 \pm 0.014$ & $0.663 \pm 0.021$ \\
 & LFADS~\cite{pandarinath2018lfads} & $0.150 \pm 0.012$ & $0.126 \pm 0.019$ & $0.507 \pm 0.009$ \\
 & Joint PCA & $0.052 \pm 0.017$ & $0.226 \pm 0.041$ & $0.682 \pm 0.030$ \\

\bottomrule
\end{tabular}
}
\end{table}

\begin{figure}[t]
    \centering
    \centerline{\includegraphics[width=0.84\columnwidth]{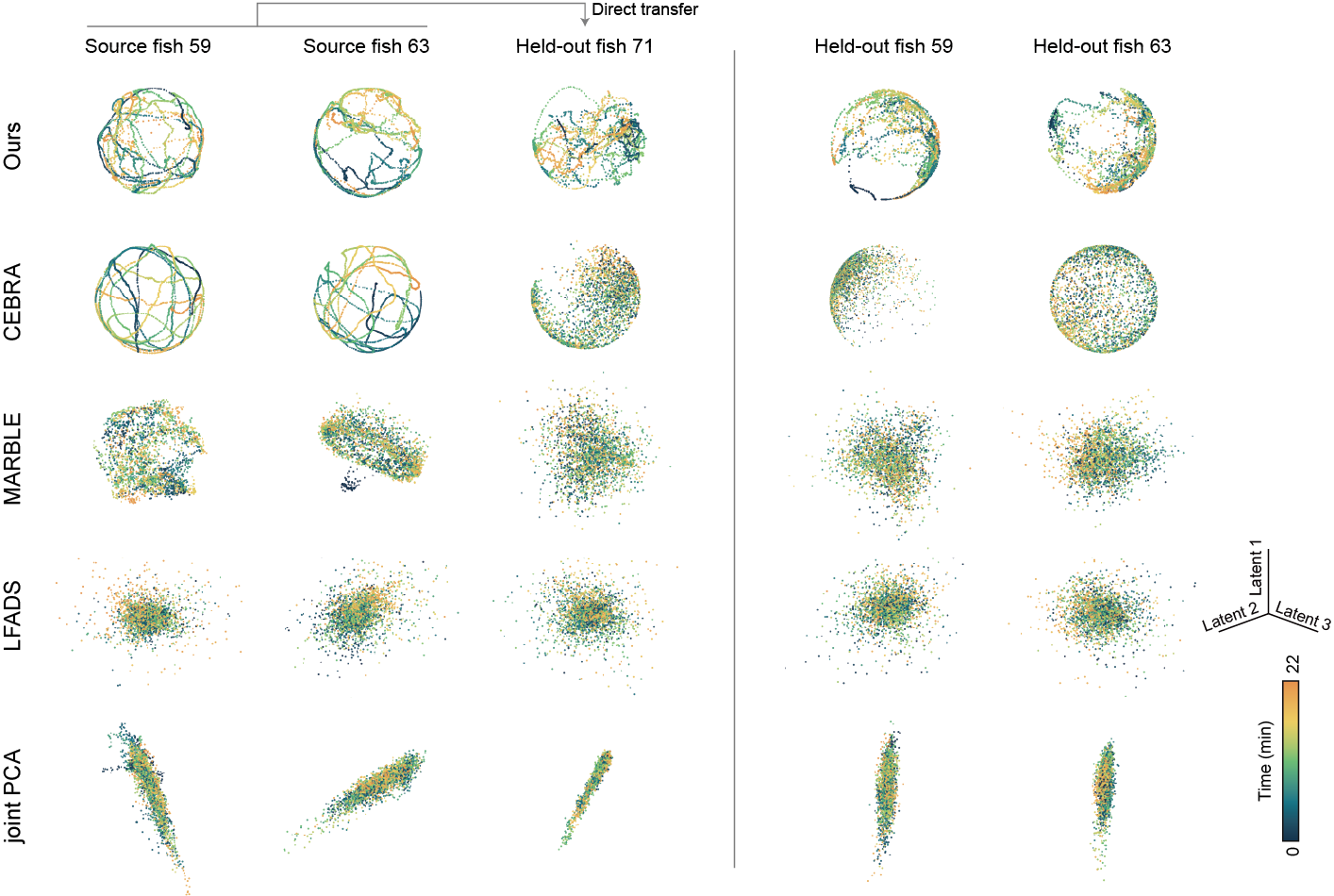}}
    \caption{Latent trajectories for two source animals (Fish 59 and Fish 63) and held-out animals across all methods. Points are colored by time within each recording. 
    Our method preserves structured temporal trajectories in held-out animals, whereas baseline methods yield fragmented or collapsed trajectories under direct held-out transfer.
}
    \label{figure_manifold_time}
\end{figure}

\section{Related Work}

\paragraph{Neural population dynamics analysis methods.}
Recent work in neural population dynamics has increasingly explored the use of deep neural networks~\citep{shenoy2013dynamical,shenoy2021brainwide}. Methods such as POYO, POCO, and NEDS learn representations that aggregate neural activity across recordings through tokenization or latent mapping, enabling cross-session and cross-task decoding without requiring neuron-wise correspondence \citep{azabou2023poyo,zhang2025neds,azabou2025multisession,duan2025poco}. Across these methods, neurons are represented through learned, session-specific representations derived from observed dynamics and implicit neuron indexing, without explicit grounding in biological identity. While some recent work has begun to explore learning neuron identity from population context \citep{arora2025know}, cross-session generalization still primarily relies on statistical regularities in observed dynamics rather than on molecular or spatial information that makes neurons comparable across individuals.

\paragraph{Neural manifold methods.}
A central observation in neuroscience is that the high-dimensional activity of neural populations often lies on low-dimensional manifolds whose geometry reflects the structure of the underlying dynamics \citep{schneider2023cebra,gosztolai2025marble,pandarinath2018lfads}. Methods for characterizing these manifolds include sequential variational autoencoders, contrastive embedding approaches that align neural activity with behavioral or stimulus variables \citep{schneider2023cebra}, and geometric deep learning methods that analyze local flow fields over manifold structure \citep{gosztolai2025marble}. These approaches have become standard tools for relating population-level dynamics to cognition, perception, and movement \citep{shenoy2013dynamical,shenoy2021brainwide}.

\paragraph{Multimodal neuron profiling and cell-type atlases.}
Recent work in systems neuroscience shows that neurons can be characterized along multiple biological axes. Multimodal profiling approaches, including Patch-seq and spatial transcriptomics, enable large-scale measurement of molecular and spatial properties \citep{gouwens2020integrated,biccn2021motorcortex,cadwell2024patchseq,fuzik2016integration,shao2024patchseqreview,yagishita2025integrating}. Large-scale atlas efforts have produced spatially resolved gene expression maps and anatomical reference frameworks at cellular resolution across whole-brain volumes \citep{zeisel2018molecular,tasic2018shared,yao2021taxonomy,kunst2019mapzebrain,shainer2023zebrafish_atlas,han2025mousebrainst}. More recent datasets jointly profile or align neural activity with transcriptomic identity within the same brain \citep{biccn2021motorcortex,marquezlegorreta2026warp}. Across species, multimodal measurements show that molecular and spatial identity are associated with functional circuit organization \citep{arkhipov2025multimodal_circuits,liu2025connectivity}. This growing availability of per-neuron identity information provides a natural foundation for incorporating biological structure into methods for analyzing population dynamics.

\section{Discussion}
We propose multimodal biological features as an organizational basis for analyzing neural population dynamics. Our method incorporates these features to route neurons into latent groups and analyze dynamics at the group level. This design preserves invariance to neuron order while enabling neurons with similar biological context to contribute through a shared functional organization. As a result, populations from new animals or sessions can be mapped into a common identity-defined structure without requiring neuron-wise correspondence or retraining.

Across zebrafish and mouse visual cortex datasets, our method achieves strong behavior and stimulus decoding while producing geometrically consistent neural manifolds under direct held-out transfer. Baseline methods also improve when provided with the same neuronal identity features, yet our routing-based framework consistently outperforms them. This suggests that effective generalization depends not only on the availability of neuronal identity information, but also on how it is used to impose a shared organization on population dynamics.

Our method also remains effective when identity features are not jointly recorded with neural activity, performing comparably when using independently acquired atlas data. This expands applicability to existing datasets where neural activity is available but neuron-level biological feature measurements are not collected in the same animals.

The performance of the method is shaped by the availability and quality of identity features. Joint measurements of neural activity and transcriptomic identity remain limited \citep{biccn2021motorcortex,marquezlegorreta2026warp,cadwell2024patchseq,fuzik2016integration,yagishita2025integrating}, and using independently acquired atlas data introduces additional challenges in registration \citep{clifton2023stalign,cho2023robust,zhang2024brainalign}. Variability across individuals and imperfections in alignment can introduce noise into identity features, affecting group assignments. Improvements in multimodal registration and joint data acquisition are therefore likely to further strengthen identity-based population analysis.

More broadly, these results highlight shared biological structure as an underexplored signal for neural population analysis. Neuronal identity provides a complementary axis for organizing population representations across individuals, suggesting a path toward methods that can integrate data across subjects without requiring exact neuron-wise correspondence.

\textbf{Acknowledgements}. This research was supported by the National Research Foundation of Korea (NRF) (RS-2026-25475246, RS-2026-25504129, RS-2026-25475217, RS-2023-00209473, RS-2023-00264409, RS-2026-25498280, RS-2022-NR068758), Korea Basic Science Institute (National Research Facilities and Equipment Center) grant funded by the Ministry of Science and ICT (RS-2024-00401676), Bio \& Medical Technology Development Program through the NRF funded by the Ministry of Science and ICT (RS-2021-NR056586), the BK21 plus program through the NRF funded by the Ministry of Education of Korea, a Research Grant from HFSP (RGP003/2024) with the award DOI (https://doi.org/10.52044/HFSP.RGP0032024.pc.gr.194148), the KAIST Quantum+X Convergence R\&D Project, the KAIST C2 (Creative \& Challenging) Project, the KAIST Jang Young Sil Fellow Program, and KAIST National Research and Development Project (N10260124), and KAIST AI-Science Hub Project (N10260124).

\bibliographystyle{unsrtnat}
\bibliography{reference}

\newpage
\appendix

\renewcommand{\thefigure}{A\arabic{figure}}
\renewcommand{\thetable}{A\arabic{table}}
\setcounter{figure}{0}
\setcounter{table}{0}

\section{Supplementary Results and Implementation Details}

\subsection{Dataset overview}

\paragraph{Multimodal information across recordings.}
\Cref{fig:conservewarp,fig:conserveallen} show representative WARP fish and Allen
visual cortex sessions. Although the exact recorded neurons or units differ
across animals and sessions, both datasets exhibit broadly comparable spatial
and regional organization. Fish in the WARP dataset show similar whole-brain distributions
of transcriptomic markers and brain-region labels, while Allen sessions cover
comparable visual cortical and associated areas. This shared structure motivates
our held-out transfer evaluation, where models are applied to held-out recordings
without neuron-wise correspondence.

\begin{figure}[h]
    \centering
    \centerline{\includegraphics[width=0.8\columnwidth]{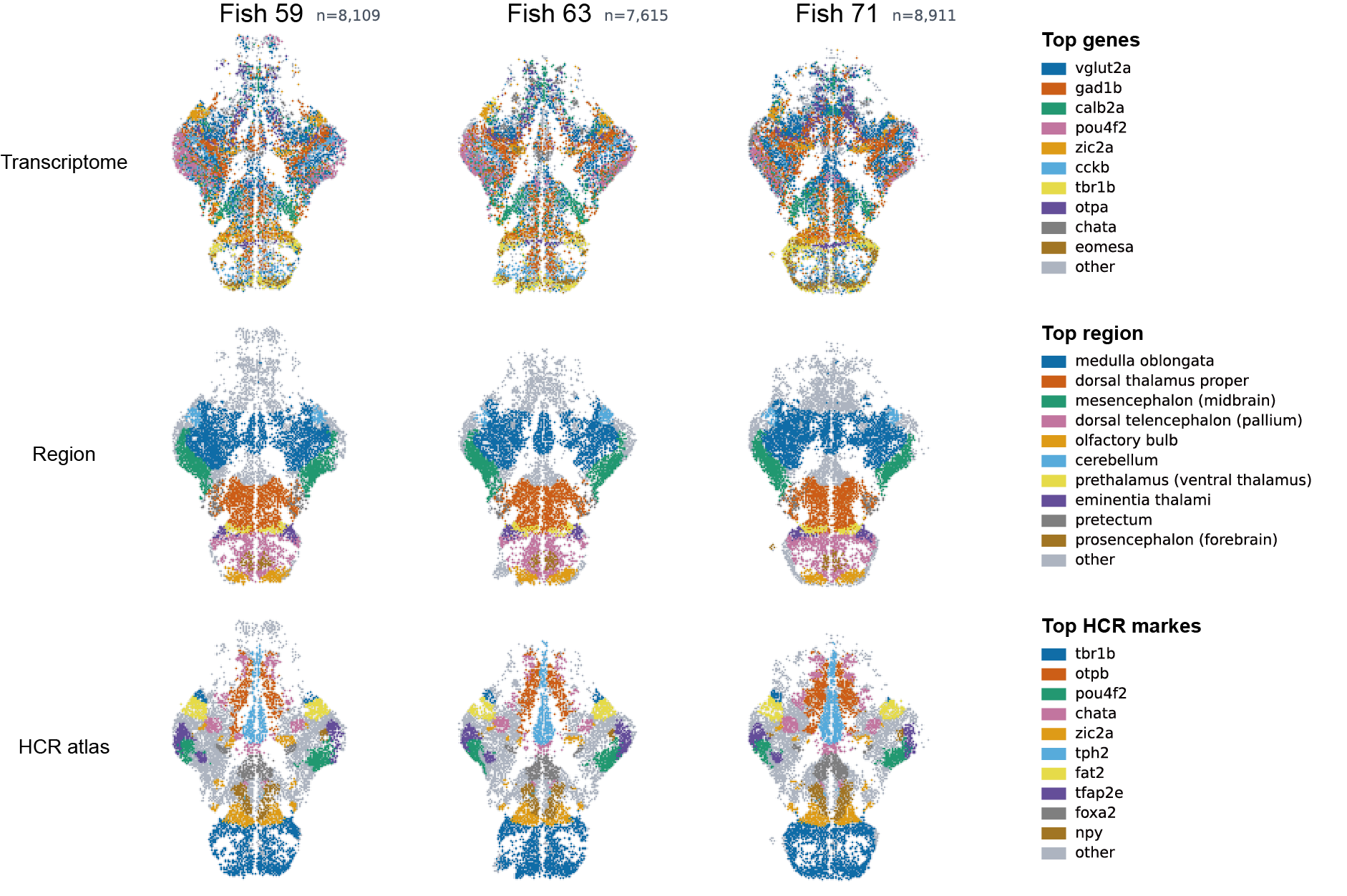}}
    \caption{Data from three fish in the WARP dataset. Neurons are shown in the registered
zebrafish brain coordinate space and colored by transcriptomic marker identity
(top), anatomical region label (middle), and selected marker-gene categories
(bottom).
}
    \label{fig:conservewarp}
\end{figure}

\begin{figure}[h]
    \centering
    \centerline{\includegraphics[width=0.8\columnwidth]{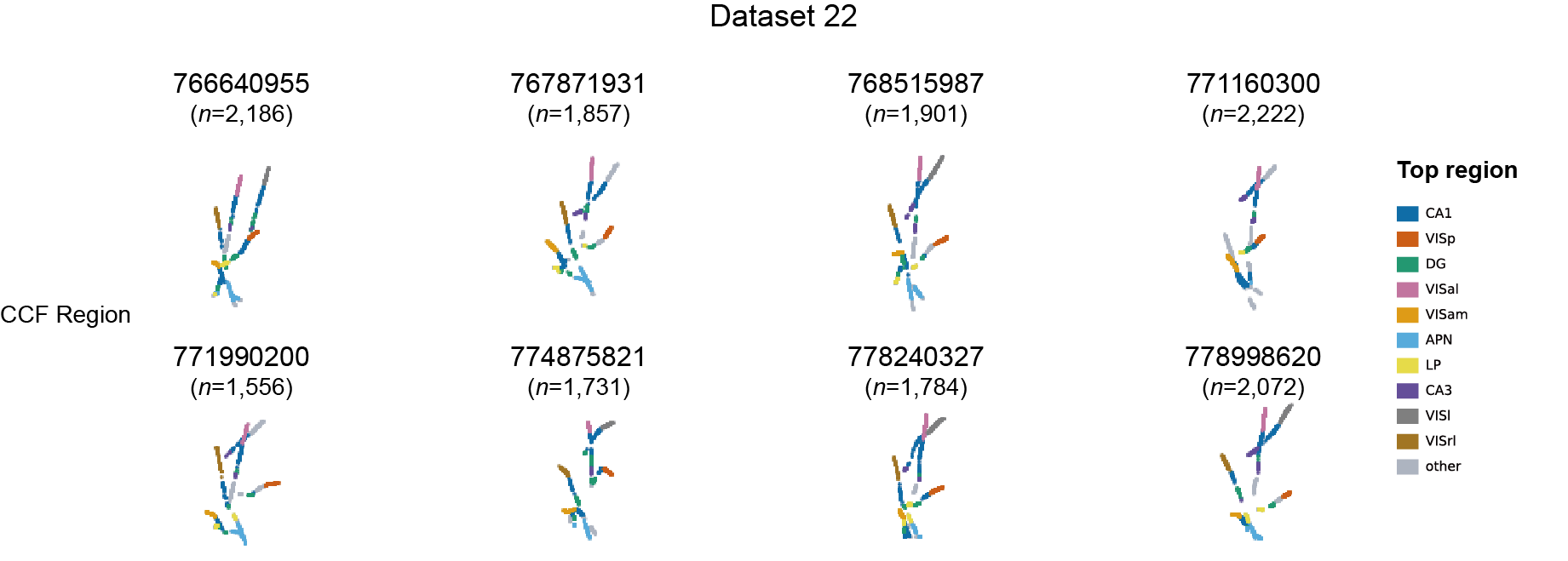}}
    \caption{Representative Allen visual cortex sessions shown in a shared
anatomical coordinate space and colored by visual area label.
}
    \label{fig:conserveallen}
\end{figure}

\paragraph{Decoding Targets.}
\label{app:decoding_targets}
Figure~\ref{fig:decodingtarget} summarizes the supervised decoding
targets used in each dataset. Behavior decoding is formulated as scalar
regression and evaluated with Pearson correlation coefficient. Stimulus decoding
is formulated as multi-class classification and evaluated with accuracy. In WARP, the behavioral target is swim power, and the stimulus target
is the type of visual stimuli. In the Allen datasets, the
behavioral target is running speed, and the stimulus target is the type of visual stimuli. These targets are used consistently for the proposed
method and all supervised decoding baselines.

\begin{figure}[h]
    \centering
    \centerline{\includegraphics[width=0.9\columnwidth]{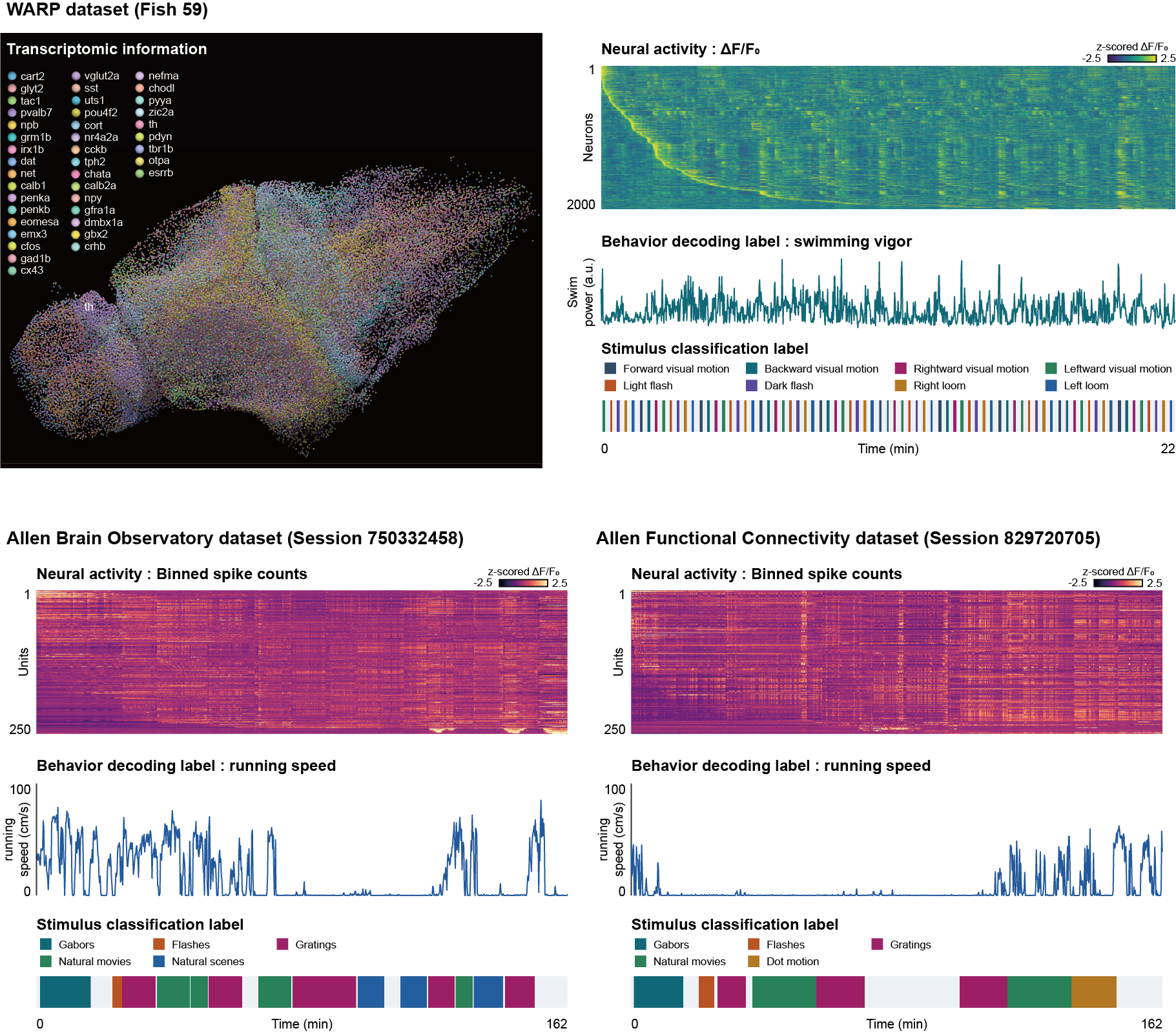}}
    \caption{Decoding targets used in WARP and Allen experiments. For WARP, we show a representative fish with neural activity from selected neurons, the swim-power behavioral  target, and the visual-stimulus classification target. For Allen, we show representative sessions from Allen Brain Observatory dataset and Allen Functional Connectivity dataset with binned spike counts from selected units, running-speed behavioral targets, and grouped stimulus-family labels. Behavior decoding is evaluated with Pearson correlation coefficient, and stimulus decoding is evaluated with  accuracy.
}
    \label{fig:decodingtarget}
\end{figure}

\subsection{Proof of Permutation Invariance}
\label{app:perm-inv}

In Section~\ref{sec:framework}, the neural population is represented as the
unordered collection
\[
    \mathcal{S}
    =
    \{(\mathbf{f}_i,\mathbf{x}_i)\}_{i=1}^{N}.
\]
For the proof, we make the arbitrary enumeration of this collection explicit and
write
\[
    \mathbf{S}
    =
    \big((\mathbf{f}_1,\mathbf{x}_1),\ldots,(\mathbf{f}_N,\mathbf{x}_N)\big).
\]
For any permutation \(\pi \in \mathfrak{S}_N\), define the reindexed input
\[
    \mathbf{S}^{\pi}
    =
    \big((\mathbf{f}_{\pi(1)},\mathbf{x}_{\pi(1)}),\ldots,
    (\mathbf{f}_{\pi(N)},\mathbf{x}_{\pi(N)})\big).
\]

\begin{proposition}[Permutation invariance]
Let \(\mathcal{F}\) be the population encoder defined by the identity-guided
routing in Eq.~\eqref{eq:routing}, the shared dynamics encoder in
Eq.~\eqref{eq:dynamics-enc}, the group-level aggregation in
Eq.~\eqref{eq:aggregation}, and the group interaction module \(\Gamma\). If the
identity encoder, dynamics encoder, and routing map are shared across neurons,
then for any permutation \(\pi \in \mathfrak{S}_N\),
\[
    \mathcal{F}(\mathbf{S}^{\pi})
    =
    \mathcal{F}(\mathbf{S}).
\]
\end{proposition}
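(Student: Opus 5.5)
The plan is to follow the computation of $\mathcal{F}$ stage by stage. The neuron-wise stages are \emph{equivariant}: permuting the inputs permutes their outputs the same way. The aggregation in Eq.~\eqref{eq:aggregation} is \emph{invariant}. Since $\Gamma$ only ever sees the group matrix $\mathbf{C}$, invariance of $\mathbf{z}$ then follows immediately.

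First I would write the neuron-wise quantities for $\mathbf{S}^{\pi}$, using primes. The identity encoder in Eq.~\eqref{eq:identity-enc} is a fixed map applied to each $\mathbf{f}_i$ separately; the $\phi_m$, $\mathrm{LN}$ and $\psi$ do not depend on the index $i$ or on the other neurons. Hence $\mathbf{s}'_i = \mathbf{s}_{\pi(i)}$. The routing in Eq.~\eqref{eq:routing} is a fixed function of $\mathbf{s}_i$ alone, with $\mathbf{W}_{\mathrm{route}}$ and $\tau$ shared, so $\mathbf{a}'_i = \mathbf{a}_{\pi(i)}$ and $a'_{ik} = a_{\pi(i)k}$ for every $k$. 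Likewise Eq.~\eqref{eq:dynamics-enc} gives $\mathbf{v}'_i = \rho(\mathbf{x}_{\pi(i)}) = \mathbf{v}_{\pi(i)}$.

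Next I would treat the aggregation step for a fixed $k$. Because $\pi$ is a bijection of $\{1,\dots,N\}$, reindexing finite sums gives
\[
m'_k=\sum_{i=1}^N a_{\pi(i)k}=\sum_{j=1}^N a_{jk}=m_k,
\qquad
\sum_{i=1}^N a'_{ik}\mathbf{v}'_i=\sum_{j=1}^N a_{jk}\mathbf{v}_j ,
\]
where $j=\pi(i)$. The prefactor $1/\sqrt{\max(m'_k,\epsilon)}$ is then identical to the original one, so $\mathbf{c}'_k=\mathbf{c}_k$ for all $k$. Since the group index $k$ is not touched by $\pi$, this gives $\mathbf{C}'=\mathbf{C}$ as matrices. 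It follows that $\mathcal{F}(\mathbf{S}^{\pi})=\Gamma(\mathbf{C}')=\Gamma(\mathbf{C})=\mathcal{F}(\mathbf{S})$, and notably no symmetry assumption on $\Gamma$ is needed.

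The only step needing care, and the one I regard as the main obstacle, is justifying that each neuron-wise map truly acts on one neuron at a time. In particular, LayerNorm must normalize over the feature dimension of a single neuron's embedding, not across neurons. Similarly, the softmax $\sigma$ must act over the $K$ group logits of one neuron, not over neurons. I would state these as the precise content of the hypothesis ``shared across neurons.'' I would also note that the appendix remark on identity features built from the whole recording does not break the argument, since that preprocessing is itself order-independent. Finally, with floating-point arithmetic summation order can differ, so I would state that the result holds in exact arithmetic.
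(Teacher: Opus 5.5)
Your proof is correct and follows essentially the same route as the paper's: you show neuron-wise equivariance ($\mathbf{s}'_i=\mathbf{s}_{\pi(i)}$, $\mathbf{v}'_i=\mathbf{v}_{\pi(i)}$, $a'_{ik}=a_{\pi(i)k}$), reindex the finite sums to get $m'_k=m_k$ and $\mathbf{c}'_k=\mathbf{c}_k$, and conclude $\Gamma(\mathbf{C}')=\Gamma(\mathbf{C})$ with no symmetry assumption on $\Gamma$. Your extra remarks (LayerNorm and softmax acting within a single neuron, exact arithmetic) make explicit what the paper leaves implicit in ``shared across neurons''; note that per-neuron action is sufficient but not necessary, since normalizing across neurons through symmetric statistics would also preserve equivariance, so the step you flag as the main obstacle is less delicate than you suggest.
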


\begin{proof}
Let the input neuron ordering be permuted by \(\pi\). Since the identity encoder,
dynamics encoder, and routing map are shared across neurons, the corresponding
neuron-level quantities are permuted in the same way:
\[
    \mathbf{s}^{\pi}_i = \mathbf{s}_{\pi(i)},
    \qquad
    \mathbf{v}^{\pi}_i = \mathbf{v}_{\pi(i)},
    \qquad
    a^{\pi}_{ik} = a_{\pi(i)k}.
\]
For each latent group \(k\), the effective group mass under the permuted input is
\[
    m_k^{\pi}
    =
    \sum_{i=1}^{N} a^{\pi}_{ik}
    =
    \sum_{i=1}^{N} a_{\pi(i)k}
    =
    \sum_{j=1}^{N} a_{jk}
    =
    m_k,
\]
where the third equality follows by reindexing \(j=\pi(i)\). Similarly,
\[
    \sum_{i=1}^{N} a^{\pi}_{ik}\mathbf{v}^{\pi}_i
    =
    \sum_{i=1}^{N} a_{\pi(i)k}\mathbf{v}_{\pi(i)}
    =
    \sum_{j=1}^{N} a_{jk}\mathbf{v}_{j}.
\]
Therefore,
\[
    \mathbf{c}^{\pi}_k
    =
    \frac{1}{\sqrt{\max(m_k,\epsilon)}}
    \sum_{i=1}^{N} a^{\pi}_{ik}\mathbf{v}^{\pi}_i
    =
    \frac{1}{\sqrt{\max(m_k,\epsilon)}}
    \sum_{j=1}^{N} a_{jk}\mathbf{v}_{j}
    =
    \mathbf{c}_k.
\]
Thus \(\mathbf{C}^{\pi}=\mathbf{C}\). Since the final representation is computed
as a deterministic function of the group matrix,
\[
    \mathbf{z}
    =
    \Gamma(\mathbf{C}),
\]
we have
\[
    \mathbf{z}^{\pi}
    =
    \Gamma(\mathbf{C}^{\pi})
    =
    \Gamma(\mathbf{C})
    =
    \mathbf{z}.
\]
Hence, \(\mathcal{F}(\mathbf{S}^{\pi})=\mathcal{F}(\mathbf{S})\), and
\(\mathcal{F}\) is permutation-invariant.
\end{proof}

\subsection{Neuronal Identity Feature Construction}
\label{app:identity_construction}
We designed the identity encoder framework to serve as a versatile interface for any available neuron-level features, allowing the set of input modalities to vary across datasets and experimental conditions. Table~\ref{tab:identity_features} summarizes all feature groups that we used and their availability across datasets. For modality-ablation experiments, we use the same construction procedure and vary only which groups are provided.
For the WARP dataset, spatial coordinates are registered to the mapzebrain reference spaces~\cite{kunst2019mapzebrain}, and jointly acquired transcriptomic marker features are available alongside atlas-derived HCR expression features. For the Allen dataset, joint transcriptomic measurements are not available, and spatial coordinates are registered to the Allen Common Coordinate Framework~\cite{wang2020allenccf}.
\paragraph{mapzebrain HCR gene expression atlas features.}
\label{app:hcr-atlas}
To test whether independently acquired, public atlas information could provide useful spatial gene-expression context for WARP neurons, we derived atlas-based features from the mapzebrain single-cell-resolution gene-expression atlas~\cite{kunst2019mapzebrain, shainer2023zebrafish_atlas}, which uses multiplexed fluorescent in situ RNA hybridization chain reaction (HCR) to generate spatial expression maps across the larval zebrafish brain. The atlas contains 236 HCR marker volumes in the mapzebrain reference space. For each WARP neuron, marker intensity was sampled at its registered position using a local Gaussian-weighted average (std.\ 8 voxels, truncated at 2 std., with boundary normalization), yielding a 236-dimensional feature vector used as independently derived spatial gene-expression covariates.
\begin{table}[h]
\small
\centering
\caption{Identity feature groups for each dataset.}
\label{tab:identity_features}
\smallskip
\begin{tabular}{p{2.8cm}p{7cm}cc}
\toprule
Feature Group & Description & WARP & Allen \\
\midrule
Transcriptomics
    & 41-dimensional gene-expression marker vector
    & \checkmark
    & \\[4pt]
Spatial position
    & 3D registered coordinate in reference space
    & \checkmark
    & \checkmark \\[4pt]
Anatomical region
    & Multihot vector over brain-region labels
    & \checkmark
    & \checkmark \\[4pt]
HCR expression
    & 236-dimensional atlas-derived gene-expression marker vector as documented in Appendix~\ref{app:hcr-atlas}
    & \checkmark
    & \\[4pt]
Electrophysiological features
    & Features from observed dynamics including mean, standard deviation, extrema, quartiles, positive fraction, mean squared value, and the standard deviation of first differences
    & \checkmark
    & \checkmark \\
\bottomrule
\end{tabular}
\end{table}

\subsection{Geometric Trajectory Metrics}
\label{app:geometric_metrics}

We provide the details of the geometric metrics used to evaluate
whether transferred latent trajectories preserve temporal organization and local
trajectory structure on held-out recordings. Let
\(\{\mathbf{z}_t\}_{t=1}^{T}\), with \(\mathbf{z}_t \in \mathbb{R}^{d}\), denote
the latent trajectory for a held-out recording. Each metric is computed
separately for each held-out trajectory and then averaged across held-out
animals or sessions.

\paragraph{Temporal distance correlation.}
This metric measures whether distances in latent space increase monotonically
with temporal separation.
We sample \(20{,}000\) frame pairs by drawing a frame
index \(i\) and an offset
\(\Delta \in \{1,\ldots,\min(100,T-1)\}\), retaining valid pairs
\((i,j)\) with \(j=i+\Delta \leq T\). We then compute the Spearman correlation
between the latent distances \(\|\mathbf{z}_i-\mathbf{z}_j\|_2\) and the
temporal gaps \(|i-j|\). Higher values indicate that temporal separation is more
consistently reflected in the latent geometry.

\paragraph{Normalized latent-neighbor time gap.}
This metric measures whether local neighborhoods in latent space correspond to
nearby time points. For each frame \(t\), we find its \(k=10\) nearest neighbors
in latent space, excluding the point itself, and compute the average temporal
gap to these neighbors:
\[
    \frac{1}{Tk}
    \sum_{t=1}^{T}
    \sum_{j \in \mathcal{N}_{10}(t)}
    |t-j|.
\]
We then divide this mean gap by \(T\) to report the metric on a normalized \([0,1]\) scale. Lower values indicate that nearby latent
states tend to correspond to nearby time points.

\paragraph{Local trajectory linearity.}
This metric quantifies the local one-dimensionality of short latent trajectory segments. We use sliding windows
of length \(25\) frames with stride \(5\). For each window \(W\), we center the
latent states in the window, compute their covariance matrix, and let
\(\lambda_1 \geq \lambda_2 \geq \cdots\) denote its positive eigenvalues. The
window-level linearity score is
\[
    \ell(W)
    =
    \frac{\lambda_1}{\sum_r \lambda_r}.
\]
The reported local trajectory linearity is obtained by averaging \(\ell(W)\)
over all sliding windows. Larger values indicate that local trajectory variance
is concentrated along a single dominant principal direction.

\subsection{Implementation Details for the Proposed Method}
\label{app:proposed_model_settings}

All experiments use the identity-guided population method described in
Sec.~\ref{sec:framework}.  At each evaluation time point, neuronal activity and
neuronal identity features are provided as input. Routing weights
are computed only from neuronal identity features, and neuron-level activity
representations are aggregated into latent groups using the resulting
identity-guided routing weights.  The group representations are  processed by the group interaction module
$\Gamma$.
Here, $\Gamma$ is implemented as an
MLP mixer over the latent group axis, whose output dimension is set by the
corresponding objective, following the common contrastive manifold learning setup of
CEBRA~\cite{schneider2023cebra}.

The same architecture is used for supervised decoding and manifold learning, with task-specific output dimensions, training objectives, and hyperparameters (Tables A2-A3). For
behavior decoding, $\Gamma$ outputs one scalar. For stimulus decoding, $\Gamma$
outputs a vector whose dimension is the number of stimulus classes. For manifold
learning, $\Gamma$ outputs a three-dimensional latent vector
$\mathbf{z}_t \in \mathbb{R}^{3}$, which is $\ell_2$-normalized before applying
the contrastive objective.

When a recording contains more neurons than the specified neurons-per-subset
value in Tables~\ref{tab:decoding_hparams}--\ref{tab:manifold_hparams}, we apply
the model to uniformly sampled neuron subsets. During training, one subset is
sampled per minibatch. During evaluation, we use a fixed set of sampled subsets and aggregate their outputs: predictions are averaged for behavior regression, class predictions are combined by majority vote for stimulus classification, and
latent states are averaged for manifold analysis. All reported metrics are computed from these aggregated outputs. The same subsampling and evaluation ensembling protocol is used for all decoding baseline methods.

\paragraph{Neural decoding.}

Neural decoding follows the held-out transfer protocol in Section~\ref{par:heldout-protocol}.

For WARP,
each method is trained on source fish and evaluated on the held-out fish. For the
Allen datasets, the model is trained on source sessions and evaluated on
held-out sessions.  For each Allen dataset, we randomly select the source and
held-out sessions once and use the same fixed split for all methods.
Model parameters are fixed during held-out evaluation, and no
held-out behavioral or stimulus labels are used for training, fine-tuning, or
model selection. 

For behavior decoding, the model predicts the
standardized target and is trained with the Huber loss
\[
    \mathcal{L}_{\mathrm{beh}}
    =
    \frac{1}{|\mathcal{B}|}
    \sum_{t \in \mathcal{B}}
    h_{\delta}(\hat{y}_t - y_t),
    \qquad
    h_{\delta}(e)
    =
    \begin{cases}
        \frac{1}{2} e^2, & |e| \leq \delta, \\
        \delta\left(|e| - \frac{1}{2}\delta\right), & |e| > \delta,
    \end{cases}
\]
with $\delta=1.0$. Here, $\mathcal{B}$ denotes the training batch, $\hat{y}_t$
is the predicted standardized behavioral target, $y_t$ is the standardized
ground-truth target, and $e=\hat{y}_t-y_t$ is the prediction error. 
For stimulus decoding, the model is trained with the standard cross-entropy loss
over stimulus classes. We use AdamW~\cite{loshchilov2017decoupled} and early stopping based on the validation score, with exponential moving average
smoothing coefficient $\alpha=0.5$. Hyperparameters are listed in
Table~\ref{tab:decoding_hparams}.

\begin{table}[t]
\centering
\small
\caption{Hyperparameters for supervised decoding with the proposed method.}
\label{tab:decoding_hparams}
\begin{tabular}{ll}
\toprule
Parameter & Value \\
\midrule
Latent groups $K$ & 256 \\
Activity encoder hidden size & 128 \\
Group activity dimension & 384 \\
Identity hidden size & 192 \\
Group mixer hidden size & 64 \\
Routing temperature & 0.35 \\
Routing sparsity & Top-5 soft assignment \\
Batch size & 64 \\
Optimizer & AdamW \\
Learning rate & $2 \times 10^{-4}$ \\
Weight decay & 0 \\
Maximum training epochs & 180 \\
WARP neurons per subset & 3000 train / 3000 eval \\
Allen neurons per subset & 250 train / 250 eval \\
Evaluation ensemble subsets & 21 \\
\bottomrule
\end{tabular}
\end{table}

\paragraph{Manifold learning.}

For manifold learning, the model is trained on source recordings using a time-contrastive objective following the CEBRA implementation
\cite{schneider2023cebra}. For each anchor time $t$, a positive time point \(t+\Delta\) is  chosen within a fixed temporal offset   and negative time points \(\mathcal{N}_t\) are drawn from other time points
in the training batch. Let
\(\mathbf{z}_t\) denote the normalized latent embedding of the
population state at time \(t\), and let \(s(\cdot,\cdot)\) denote cosine
similarity. We use the CEBRA-style generalized InfoNCE objective, in which the
positive pair contributes an attraction term and the log-partition term is
estimated over negative samples:
\[
    \mathcal{L}_{\mathrm{manifold}}
    =
    -s(\mathbf{z}_t,\mathbf{z}_{t+\Delta})/\tau_{nce}
    +
    \log \sum_{u \in \mathcal{N}_t}
    \exp\!\left(s(\mathbf{z}_t,\mathbf{z}_u)/\tau_{nce}\right)
\]
where \(\tau_{nce}\) is the contrastive temperature and \(\mathcal{N}_t\) denotes the
negative set only. This follows the CEBRA convention in which the positive
sample is not included in the log-sum-exp term. 
After training on the source fish, the learned
mapping is applied directly to the held-out fish without retraining or
neuron-wise alignment. Hyperparameters are listed in
Table~\ref{tab:manifold_hparams}.

\begin{table}[t]
\centering
\small
\caption{Hyperparameters for manifold learning with the proposed method.}
\label{tab:manifold_hparams}
\begin{tabular}{ll}
\toprule
Parameter & Value \\
\midrule
Output dimension & 3 \\
Latent groups $K$ & 32 \\
Activity encoder hidden size & 16 \\
Group activity dimension & 32 \\
Identity hidden size & 16 \\
Group mixer hidden size & 32 \\
Routing temperature & 0.35 \\
Routing sparsity & Top-4 soft assignment \\
Contrastive similarity & Cosine \\
Contrastive temperature $\tau_{nce}$ & 1.0 \\
Positive temporal offset $\Delta$ & 5 frames \\
Batch size & 128 \\
Optimizer & AdamW \\
Learning rate & $1 \times 10^{-3}$ \\
Weight decay & 0 \\
Training epochs & 3000 \\
WARP neurons per subset & 3000 train / 3000 eval \\
Allen neurons per subset & 250 train / 250 eval \\
Evaluation ensemble subsets & 11 \\
\bottomrule
\end{tabular}
\end{table}

\subsection{Baseline methods for decoding}
\label{app:decoding_baseline_adaptation}
We evaluated DeepSets~\cite{zaheer2017deep}, POYO~\cite{azabou2023poyo}, and NEDS~\cite{zhang2025neds}. We evaluated a DeepSets baseline using the standard permutation-invariant form \(\rho(\sum_i \phi(\mathbf{x}_i))\), where each neuron is treated as one set element. Here, \(\mathbf{x}_i\) denotes the input feature vector for neuron \(i\), \(\phi\) is a
shared element encoder, and \(\rho\) is a set-level MLP applied after summation.
The input \(\mathbf{x}_i\) contains the neuron's activity segment, with the neuronal
identity vector concatenated when identity features are used. The element MLP
\(\phi\) produces neuron-level embeddings, which are summed across neurons and
passed through \(\rho\) to produce the decoding representation. We used the same
temporal input construction and task outputs as the other supervised decoding
baselines.

POYO~\cite{azabou2023poyo} and NEDS~\cite{zhang2025neds} were originally developed for spike-based neural recordings. In this work, we evaluate these baselines on datasets with different signal modalities, including calcium signals in the WARP dataset and spiking activity in the Allen datasets. For each dataset, we adapt only the input representation to match the signal modality, while preserving the original architectural design of each model as much as possible.

\paragraph{Calcium-based Input for the WARP dataset.}
For NEDS, we extend the input stage to process continuous calcium traces. Calcium activity is first divided into fixed-length temporal windows, yielding a neuron-by-time representation within each window. Each neuron's calcium signal is embedded into a latent space and concatenated with the corresponding neuron identity embedding. Since this representation includes both neuron and temporal dimensions, we use an attention-based adapter to aggregate neuron-level representations into temporally indexed tokens aligned with the input structure of NEDS. The resulting tokens are then passed to the original NEDS transformer encoder and task-specific decoding heads, yielding a calcium-adapted NEDS.

For POYO, we use a POYO+~\cite{azabou2025multisession} style calcium input representation. Specifically, neuron identity embeddings are added to the original unit embeddings, and calcium value embeddings at each time point are added to these representations. This allows POYO to process continuous calcium observations while retaining its token-based architecture, resulting in a calcium-adapted POYO.

\paragraph{Spike-based Input for the Allen Datasets.}
The Allen datasets consist of spike recordings, and thus both POYO and NEDS are used in their original spike-based formulations without modifying the input representation. 

To incorporate neuronal identity information, we extend each model in a manner consistent with its original design. For NEDS, neuron identity embeddings are combined with neural activity embeddings before being processed by the model. For POYO, identity embeddings are included with the unit embeddings from the original implementation. These unit embeddings were initialized following the official implementation. These modifications allow both models to incorporate neuronal identity information while preserving their original architectural assumptions.

\subsection{Hyperparameter Settings for Decoding Baselines}
\label{app:decoding_hyperparameters}

POYO is implemented as a transformer with cross-attention and self-attention, using a token dimension of 64, 6 layers, 2 cross-attention heads, and 8 self-attention heads. NEDS is implemented as a transformer encoder with a hidden size of 128, 3 layers, and 4 attention heads. In both models, neuronal identity features are encoded as 32-dimensional embeddings produced by an MLP comprising linear layers, an activation function, and layer normalization.

For training, the batch size is set to 64 across all experiments. POYO uses a learning rate of $3.125 \times 10^{-5}$ for both WARP and Allen datasets, while NEDS uses $3 \times 10^{-4}$ for WARP and $3 \times 10^{-5}$ for Allen.

\subsection{Manifold Baseline Implementation Details}
\label{app:manifold_baselines}

We evaluated CEBRA, MARBLE, joint PCA, and LFADS under the same held-out transfer protocol defined in Section~\ref{par:heldout-protocol}. Unless otherwise stated, all baseline embeddings were three-dimensional and were evaluated on the same held-out animals or sessions as the proposed model.

For CEBRA, we used the official CEBRA Python implementation. We trained a time-conditional CEBRA encoder on each source recording and applied the trained encoder directly to the held-out recording without refitting. We used the \texttt{offset1-model} architecture with \texttt{time\_offsets=1}, output dimension 3, learning rate \(3\times10^{-4}\), and temperature 1.0. WARP transfer runs used batch size 512 and 10,000 optimization iterations; Allen runs used batch size 256 and 2,000 iterations.

For MARBLE, we used the public MARBLE implementation. Neural activity was first normalized and projected to 10 principal components. We used the following setup: \(k=60\), 64 eigenvectors, and \texttt{inner\_product\_features=False}. MARBLE was trained for 160 epochs on the source recording. Held-out activity was transformed using the source PCA basis and the trained source MARBLE checkpoint.

For joint PCA, we concatenated source activity along time, fit PCA on the source recordings only, and applied the resulting projection to held-out activity. LFADS baselines used precomputed LFADS latent trajectories from the corresponding LFADS training environment; no LFADS parameters were refit during manifold comparison. For source-encoder baselines on Allen, metrics were averaged over source-session encoders.

\subsection{Manifold visualizations}

We additionally visualize latent trajectories from source and held-out fish across methods, colored by behavior (Fig.~\ref{figure_manifold_behavior}) and stimulus (Fig.~\ref{figure_manifold_stimulus}) labels.

\begin{figure}[t]
    \centering
    \centerline{\includegraphics[width=0.84\columnwidth]{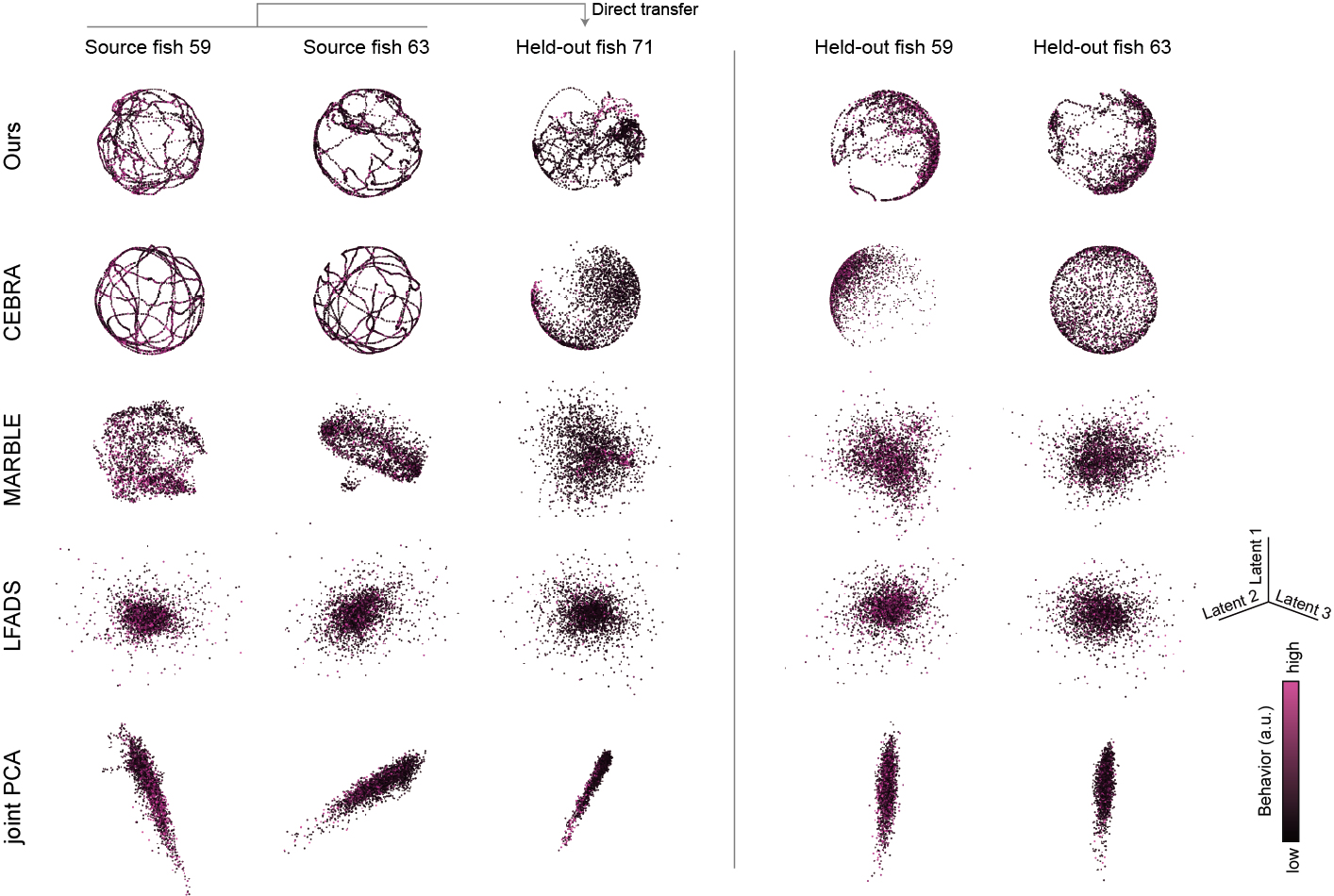}}
    \caption{Latent trajectories for two source fish (Fish 59 and Fish 63) and held-out fish across all methods. Points are colored by behavior label within each recording.
}
    \label{figure_manifold_behavior}
\end{figure}

\begin{figure}[t]
    \centering
    \centerline{\includegraphics[width=0.84\columnwidth]{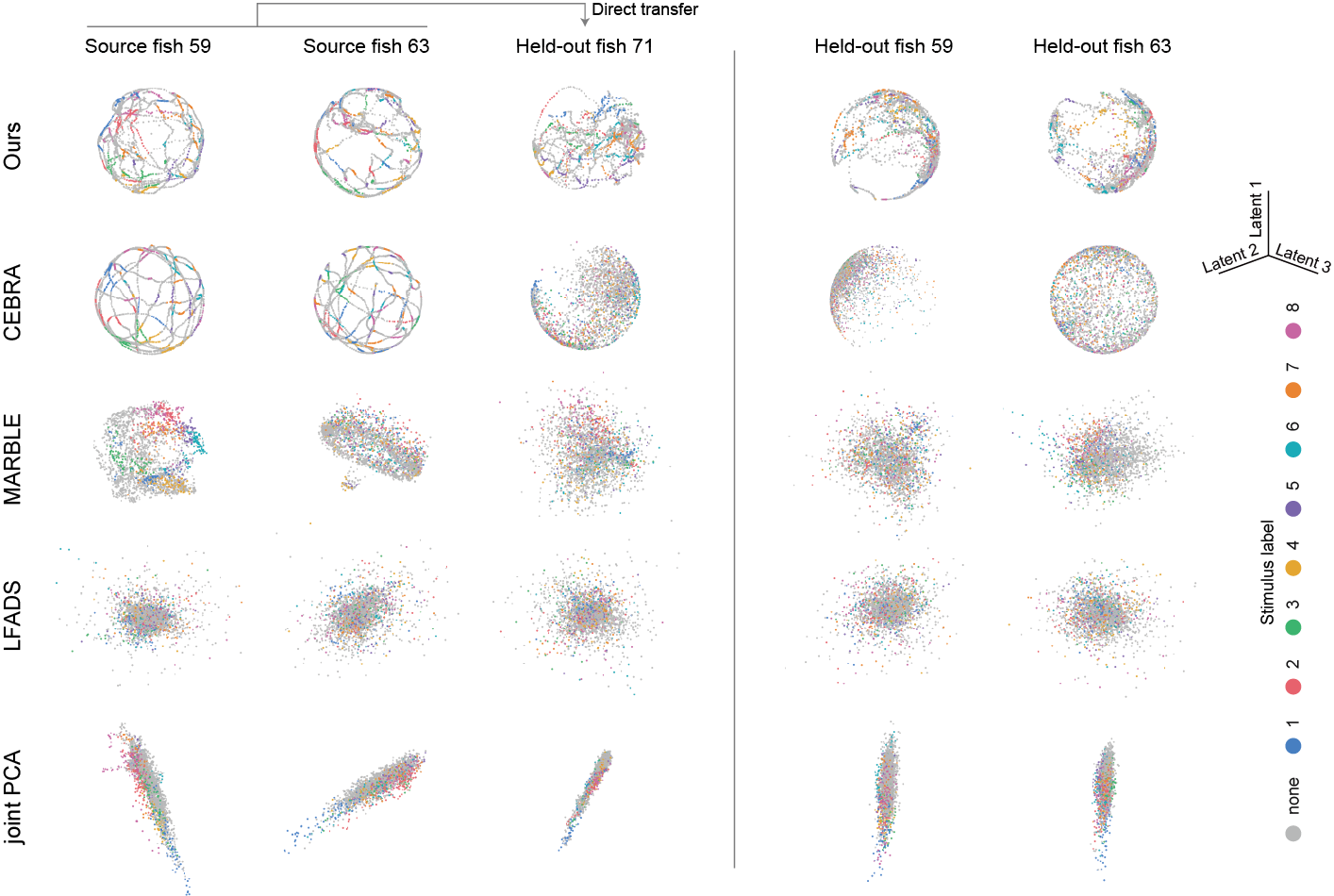}}
    \caption{Latent trajectories for two source fish (Fish 59 and Fish 63) and held-out fish across all methods. Points are colored by stimulus label within each recording.
}
    \label{figure_manifold_stimulus}
\end{figure}

\subsection{Effect of Neuronal Identity on Decoding Baselines}
\label{app:decoding_identity_effect}
To evaluate the contribution of neuronal identity, we incorporate multimodal biological features into baseline models. Details of the implementation and input construction are provided in Appendix~\ref{app:decoding_baseline_adaptation}. For each baseline, we compare the original model with a variant that incorporates per-neuron identity features. Training is performed only on source data, and models are evaluated on held-out recordings without any fine-tuning.

Baselines with identity features are distinguished based on the composition of multimodal biological features used. For the WARP dataset, we utilize transcriptomic profile (T), spatial position (S), HCR gene expression marker (H), anatomical region (R), and electrophysiological features (P), denoted as POYO + (T, S, H, R, P) and NEDS + (T, S, H, R, P). In contrast, for the Allen datasets, we use spatial position (S), anatomical region (R), and electrophysiological features (P), denoted as POYO + (S, R, P) and NEDS + (S, R, P).

As shown in Table \ref{tab:appendix_performancegain}, incorporating neuronal identity generally leads to improved baseline performance across datasets. Notably, performance gains are observed for NEDS and POYO even under the held-out transfer setting. This suggests that neuronal identity provides information that is useful for transfer without requiring retraining or neuron-wise correspondence.

\begin{table}[t!]
\centering
\caption{Behavior and stimulus decoding under direct held-out transfer across three datasets
(mean $\pm$ std). * denotes models adapted to calcium signals.}
\label{tab:appendix_performancegain}
\smallskip
\resizebox{0.9\linewidth}{!}{%
\begin{tabular}{llcc}
\toprule
Dataset & Method
    & Behavior (PCC) $\uparrow$
    & Stimulus (Balanced accuracy) $\uparrow$ \\
\midrule
\multirow{6}{*}{WARP}
 & {POYO* $+$ (T,S,H,R,P)}
    & { ${0.489 \pm 0.096}$} & {${0.164 \pm 0.042}$} \\
 & {POYO*~\cite{azabou2023poyo}}
    & {${0.304 \pm 0.120}$} & {${0.153 \pm 0.032}$} \\
 & {NEDS* $+$ (T,S,H,R,P)}
    & {${0.574 \pm 0.108}$} & {${0.374 \pm 0.074}$} \\
 & {NEDS*~\cite{zhang2025neds}}
    & {${0.248 \pm 0.207}$} & {${0.195 \pm 0.034}$} \\
 & DeepSets $+$ (T,S,H,R,P)
    & $0.466 \pm 0.051$
    & $0.163 \pm 0.023$ \\
 & DeepSets~\cite{zaheer2017deep}
    & $0.160 \pm 0.081$
    & $0.161 \pm 0.011$ \\
\midrule
\multirow{6}{*}{\shortstack[l]{Allen Brain\\Observatory}}
 & {POYO $+$ (S,R,P)}
    & {${0.695 \pm 0.003}$} & {${0.439 \pm 0.051}$} \\
 & {POYO~\cite{azabou2023poyo}}
    & {${0.526 \pm 0.043}$} & {${0.220 \pm 0.022}$} \\
 & {NEDS $+$ (S,R,P)}
    & {${0.498 \pm 0.055}$} & {${0.475 \pm 0.025}$} \\
 & {NEDS~\cite{zhang2025neds}}
    & {${0.500 \pm 0.042}$} & {${0.424 \pm 0.043}$} \\
 & DeepSets $+$ (S,R,P)
    & $0.623 \pm 0.060$
    & $0.727 \pm 0.063$ \\
 & DeepSets~\cite{zaheer2017deep}
    & $0.615 \pm 0.022$
    & $0.684 \pm 0.027$ \\
\midrule
\multirow{6}{*}{\shortstack[l]{Allen Functional\\Connectivity}}
 & {POYO $+$ (S,R,P)}
    & {${0.622 \pm 0.046}$} & {${0.494 \pm 0.068}$} \\
 & {POYO~\cite{azabou2023poyo}}
    & {${0.554 \pm 0.080}$} & {${0.211 \pm 0.008}$} \\
 & {NEDS $+$ (S,R,P)}
    & {${0.470 \pm 0.029}$} & {${0.449 \pm 0.043}$} \\
 & {NEDS~\cite{zhang2025neds}}
    & {${0.454 \pm 0.036}$} & {${0.398 \pm 0.020}$} \\
 & DeepSets $+$ (S,R,P)
    & $0.565 \pm 0.089$
    & $0.708 \pm 0.065$ \\
 & DeepSets~\cite{zaheer2017deep}
    & $0.588 \pm 0.096$
    & $0.730 \pm 0.044$ \\
\bottomrule
\end{tabular}
}
\end{table}

\subsection{Ablation Studies}
\label{app:ablations}

We conduct ablations to separate the contribution of identity
information from the contribution of the group-routing architecture.
Throughout this section, identity modalities are denoted by
\(T\) for transcriptomic profile, \(S\) for spatial position,
\(H\) for atlas-derived HCR gene markers, \(R\) for anatomical region, and
\(P\) for electrophysiological features.

\paragraph{Identity-modality ablations.}

Table~\ref{tab:ablation_warp} reports WARP modality ablations, removing
one modality at a time from the complete neuronal descriptor set \((T,S,H,R,P)\).
Across ablations, performance remains comparable to the all-modality configuration for most removals, indicating that the framework uses complementary identity features without depending on a single descriptor group.

\begin{table}[h]
\small
\centering
\caption{Modality ablation on WARP
(mean \(\pm\) std over 5 seeds). Each row removes one modality from the
full identity set. We denote static identity features by $T$ for transcriptomic profile, $S$ for spatial position, $H$ for atlas-derived HCR gene markers, and $R$ for anatomical region. We denote the electrophysiological features by $P$.}
\label{tab:ablation_warp}
\smallskip
\begin{tabular}{lcc}
\toprule
Method & Behavior (PCC) \(\uparrow\) & Stimulus (Balanced accuracy) \(\uparrow\) \\
\midrule
Ours (T,S,H,R,P)
    & \(0.793 \pm 0.015\)
    & \(0.835 \pm 0.029\) \\
Ours (S,H,R,P), w/o \(T\)
    & \(0.727 \pm 0.013\)
    & \(0.711 \pm 0.016\) \\
Ours (T,H,R,P), w/o \(S\)
    & \(0.721 \pm 0.014\)
    & \(0.573 \pm 0.010\) \\
Ours (T,S,R,P), w/o \(H\)
    & \(0.717 \pm 0.029\)
    & \(0.693 \pm 0.032\) \\
Ours (T,S,H,P), w/o \(R\)
    & \(0.716 \pm 0.020\)
    & \(0.682 \pm 0.030\) \\
Ours (T,S,H,R), w/o \(P\)
    & \(0.704 \pm 0.026\)
    & \(0.722 \pm 0.025\) \\
\bottomrule
\end{tabular}
\end{table}

Table~\ref{tab:ablation_allen} reports the corresponding Allen Brain
Observatory modality ablations using the available identity set
\((S,R,P)\). 
Both behavior decoding and stimulus classification remain effective under single-modality removals in this dataset.
We interpret the Allen ablation conservatively:
the compact \((S,R,P)\) identity set is sufficient for transfer when
molecular identity features are unavailable, while no single modality
alone explains the full behavior-decoding performance.

\begin{table}[h]
\small
\centering
\caption{Modality ablation on the Allen Brain Observatory dataset
(mean \(\pm\) std over 3 seeds). Each row removes one
modality from the full identity set. Here \(S\) denotes spatial position,
\(R\) anatomical region, and \(P\) electrophysiology features.}
\label{tab:ablation_allen}
\smallskip
\begin{tabular}{lcc}
\toprule
Method & Behavior (PCC) \(\uparrow\) & Stimulus (Balanced accuracy) \(\uparrow\) \\
\midrule
Ours (S,R,P)
    & \(0.734 \pm 0.008\)
    & \(0.893 \pm 0.012\) \\
Ours (R,P), w/o \(S\)
    & \(0.686 \pm 0.037\)
    & \(0.786 \pm 0.066\) \\
Ours (S,P), w/o \(R\)
    & \(0.711 \pm 0.007\)
    & \(0.800 \pm 0.040\) \\
Ours (S,R), w/o \(P\)
    & \(0.696 \pm 0.004\)
    & \(0.787 \pm 0.007\) \\
\bottomrule
\end{tabular}
\end{table}

\paragraph{Routing-design ablations.}

Table~\ref{tab:ablation_arch} isolates an architectural choice in
the group-routing framework on behavior decoding. Replacing soft routing with
hard top-1 assignment reduces performance, consistent with the need to represent
soft identity-to-group assignments rather than forcing each neuron into a single
discrete group. 

\begin{table}[h]
\centering
\caption{Architectural ablations on behavior decoding
(mean \(\pm\) std over 5 seeds for WARP and 3 seeds for Allen). Each row varies
one design choice from the full model. Bold indicates the best result within
each dataset.}
\label{tab:ablation_arch}
\smallskip
\resizebox{0.8\linewidth}{!}{%
\begin{tabular}{llcc}
\toprule
Dataset & Configuration & Changed component & Behavior (PCC) \(\uparrow\) \\
\midrule
\multirow{2}{*}{WARP}
 & \textbf{Full model}
    & Soft routing
    & \(\mathbf{0.793 \pm 0.015}\) \\
 & Hard top-1 routing
    & Soft routing removed
    & \(0.708 \pm 0.022\) \\
\midrule
\multirow{2}{*}{\shortstack[l]{Allen Brain\\Observatory}}
 & \textbf{Full model}
    & Soft routing
    & \(\mathbf{0.734 \pm 0.008}\) \\
 & Hard top-1 routing
    & Soft routing removed
    & \(0.678 \pm 0.003\) \\
\midrule
\multirow{2}{*}{\shortstack[l]{Allen Functional\\Connectivity}}
 & \textbf{Full model}
    & Soft routing
    & \(\mathbf{0.694 \pm 0.058}\) \\
 & Hard top-1 routing
    & Soft routing removed
    & \(0.626 \pm 0.078\) \\
\bottomrule
\end{tabular}
}
\end{table}

\subsection{Best-Ordering Analysis for Manifold Baselines}
\label{app:baseline_ordering}

Several manifold baselines operate on fixed input coordinates: their learned
encoders, projection matrices, or latent mappings are tied to the neuron indices
used during source training. In held-out transfer across animals, the target
recording contains a different set of neurons, and no neuron-wise correspondence
to the source recording is available. As a result, applying these baselines to a
held-out animal requires assigning an arbitrary order to the target neurons.

To make this comparison more favorable to the index-dependent baselines, we do
not rely on a single target-neuron ordering. Instead, for each held-out recording
and each baseline, we sample \(1000\) random permutations of the target neuron
order, apply the source-trained baseline under each ordering, and recompute the
three geometric trajectory metrics. We then examine the full distribution of
metric values as well as the best value obtained across sampled orderings.

Figures~\ref{fig:perm_temporal_corr}--\ref{fig:perm_local_linearity} show the
resulting distributions. Across baselines and metrics, our method generally outperforms the best sampled ordering.
This indicates that the observed gap is not due to choosing a single unfavorable
target ordering for the baselines. Rather, the index-dependent baselines lack a
mechanism for mapping an unaligned held-out population into a stable shared
coordinate system.

\begin{figure}[h]
    \centering
    \includegraphics[width=0.8\linewidth]{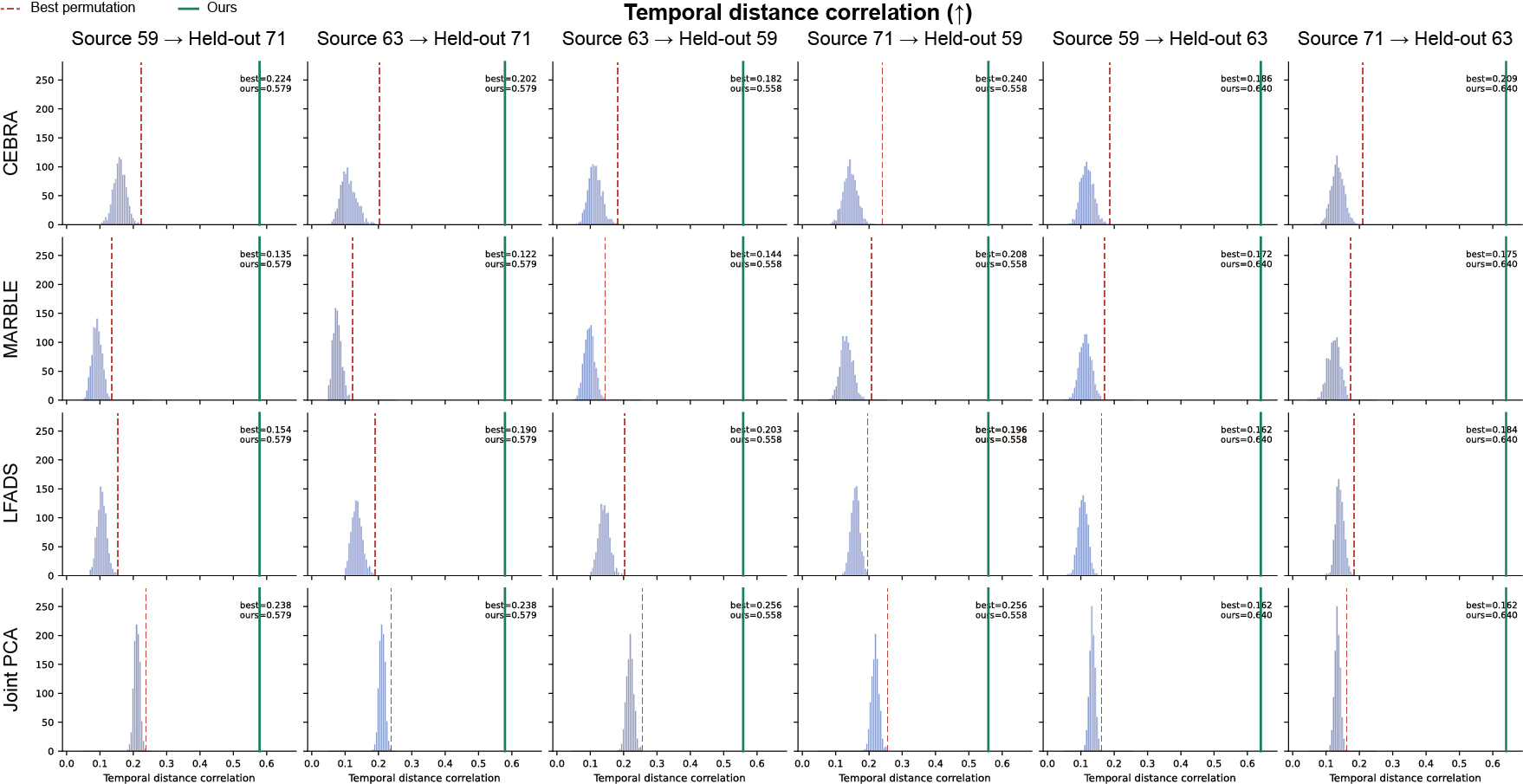}
    \caption{Target-ordering analysis for temporal distance correlation. Each
histogram shows baseline performance over \(1000\) sampled target-neuron
orderings. The vertical green line indicates our method and the vertical red dash indicates the best performance among the sampled orderings.}
    \label{fig:perm_temporal_corr}
\end{figure}

\begin{figure}[h]
    \centering
    \includegraphics[width=0.8\linewidth]{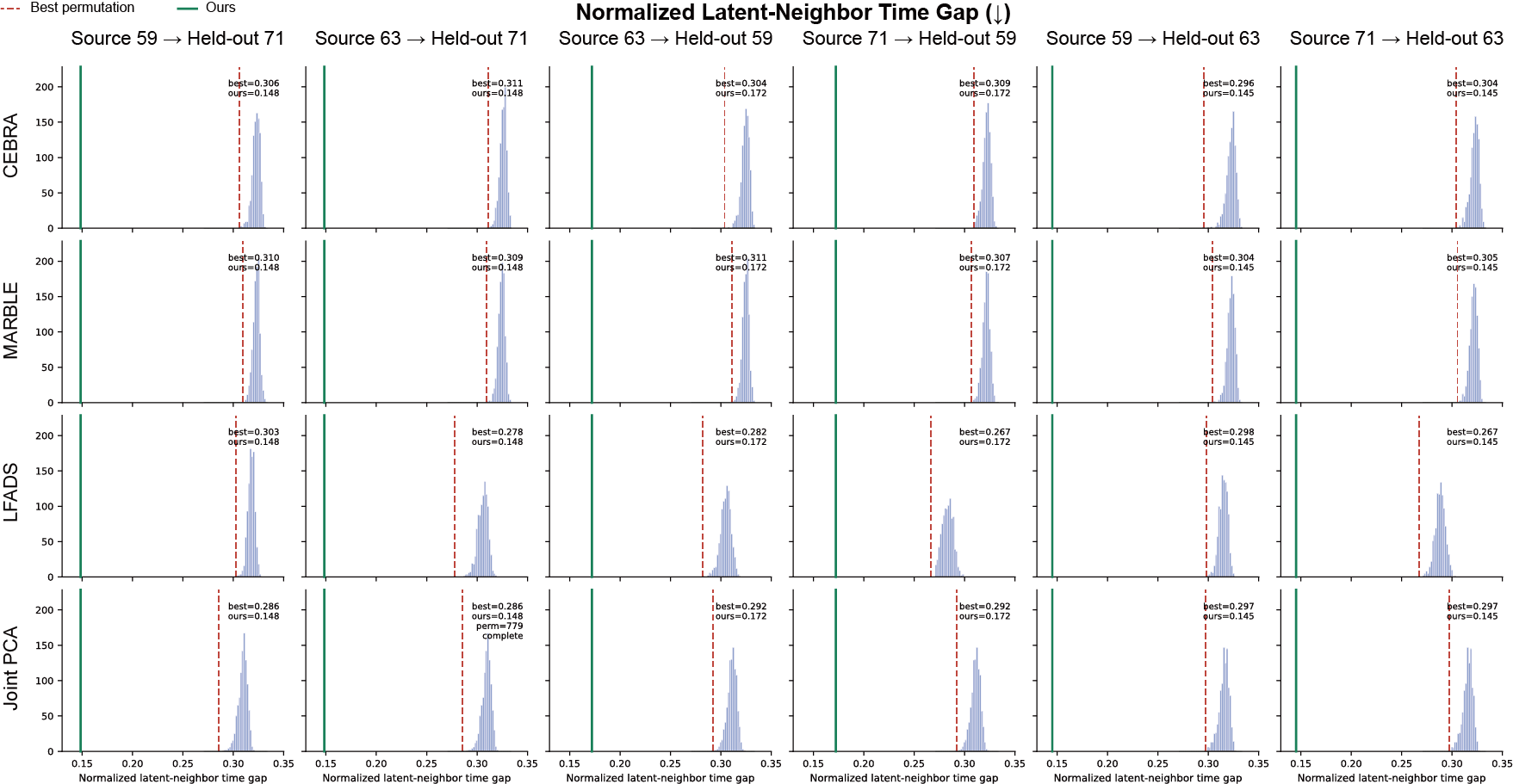}
    \caption{Target-ordering analysis for normalized latent-neighbor time gap. Each
histogram shows baseline performance over \(1000\) sampled target-neuron
orderings. The vertical green line indicates our method and the vertical red dash indicates the best performance among the sampled orderings.}
    \label{fig:perm_neighbor_gap}
\end{figure}

\begin{figure}[h]
    \centering
    \includegraphics[width=0.8\linewidth]{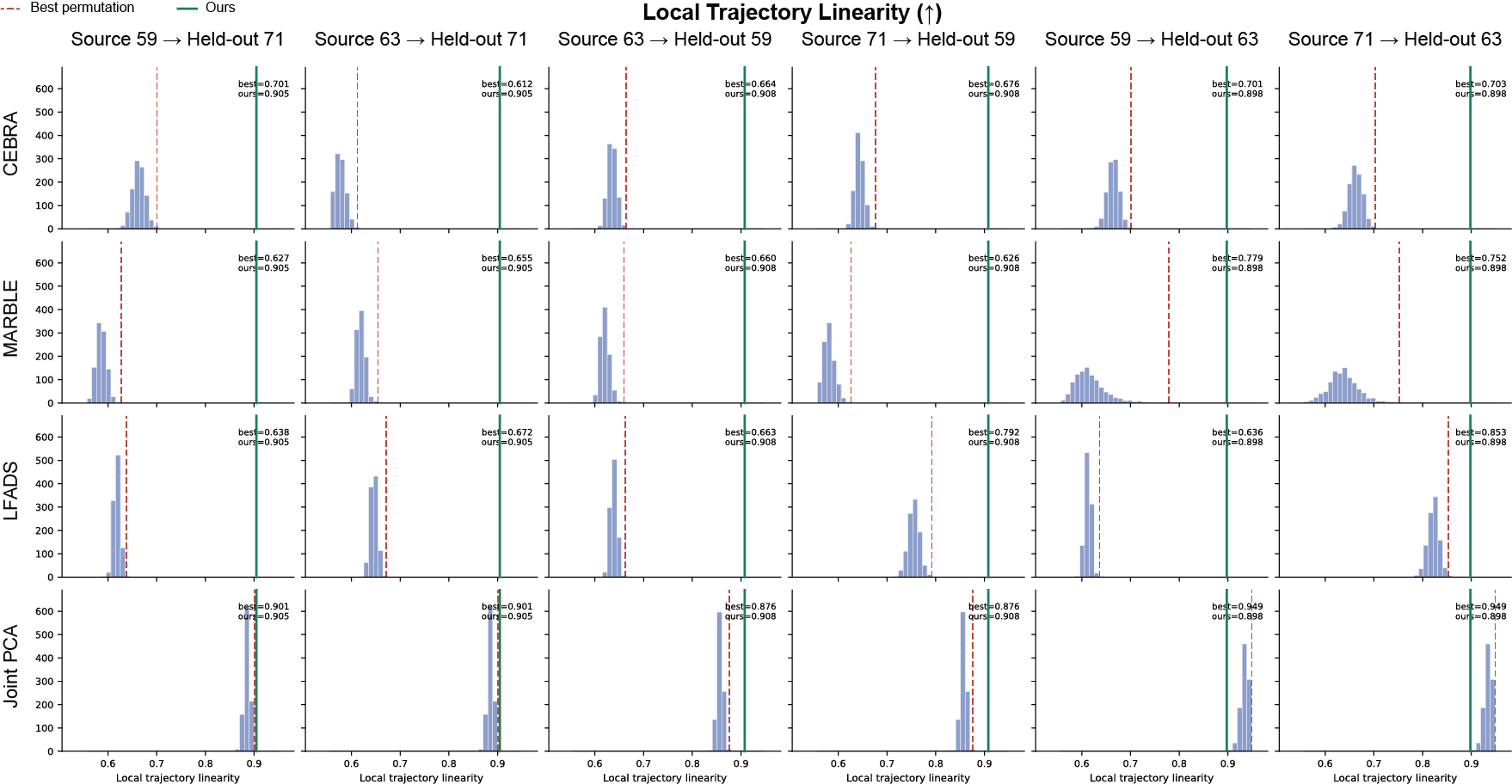}
    \caption{Target-ordering analysis for local trajectory linearity. Each histogram
shows baseline performance over \(1000\) sampled target-neuron orderings.  The vertical green line indicates our method and the vertical red dash indicates the best performance among the sampled orderings.}
    \label{fig:perm_local_linearity}
\end{figure}

\clearpage
\subsection{Compute Resources}
\label{app:compute}

All experiments were run on a Linux workstation with two Intel Xeon Silver 4214R CPUs, 440 GB system RAM, and three NVIDIA GeForce RTX 3090 GPUs with 24 GB memory per GPU. Each training or evaluation job used a single GPU.  All reported experiments fit within the memory of a single RTX 3090 GPU.

Table~\ref{tab:compute_resources} reports the wall-clock time required to reproduce the main reported experiments using our run configurations.

\begin{table}[h]
\centering
\caption{Compute required to reproduce the reported experiments. GPU-hours are estimated as wall-clock time multiplied by the number of GPUs used by the launcher.}
\label{tab:compute_resources}
\resizebox{0.9\linewidth}{!}{%
\begin{tabular}{lccc}
\toprule
Experiment group & GPUs used & Wall-clock time & Approx. GPU-hours \\
\midrule
WARP supervised decoding experiments & 3 RTX 3090 GPUs & 1.5 h & 4.6 \\
Allen supervised decoding experiments & 3 RTX 3090 GPUs & 7.8 h & 23.3 \\
WARP latent trajectory experiments & 3 RTX 3090 GPUs & 0.5 h & 1.5 \\
Allen latent trajectory experiments & 3 RTX 3090 GPUs & 6.4 h & 19.2 \\
\midrule
Total for reported experiments & -- & -- & $\approx 49$ GPU-hours \\
\bottomrule
\end{tabular}
}
\end{table}

The reported compute covers the runs used for the main decoding and latent trajectory tables, including the proposed method, ablations, and evaluated baselines. Additional exploratory runs used during development are not required to reproduce the reported results and are not included in Table~\ref{tab:compute_resources}.

\newpage

\end{document}